\newcommand{\st}[1]{\text{s.t.}}
\newtheorem{thm}{Theorem}
\newtheorem{definition}{Definition}
\newtheorem{lemma}{Lemma}
\newtheorem{assumption}{Assumption}
\begin{document} 
\title{Federated Fine-Tuning for Pre-Trained Foundation Models Over Wireless Networks}	 
\author{
	\normalsize{Zixin Wang, \textit{Member, IEEE}, Yong Zhou, \textit{Senior Member, IEEE}, \\ Yuanming Shi, \textit{Senior Member, IEEE}, and Khaled. B. Letaief, \textit{Fellow, IEEE}}
	\thanks{Part of this paper has been submitted to the IEEE Global Communications Conference,  Dec. 2024. Z. Wang and K. B. Letaief are with the Department of Electronic and Computer Engineering, The Hong Kong University of Science and Technology, Hong Kong (E-mail: \{eewangzx, eekhaled\}@ust.hk). Y. Zhou and Y. Shi are with the School of Information Science and Technology, ShanghaiTech University, Shanghai, 201210, China (E-mail: \{zhouyong, shiym\}@shanghaitech.edu.cn).
	}
}	
\maketitle
\begin{abstract}
Pre-trained foundation models (FMs), with extensive number of neurons, are key to advancing next-generation intelligence services, where personalizing these models requires massive amount of task-specific data and computational resources. The prevalent solution involves centralized processing at the edge server, which, however, raises privacy concerns due to the transmission of raw data.
Instead, federated fine-tuning (FedFT) is an emerging privacy-preserving fine-tuning (FT) paradigm for personalized pre-trained foundation models. 
In particular, by integrating low-rank adaptation (LoRA) with federated learning (FL), federated LoRA enables the collaborative FT of a global model with edge devices, achieving comparable learning performance to full FT while training fewer parameters over distributed data and preserving raw data privacy. 
However, the limited radio resources and computation capabilities of edge devices pose significant challenges for deploying federated LoRA over wireless networks. 
To this paper, we propose a split federated LoRA framework, which deploys the computationally-intensive encoder of a pre-trained model at the edge server, while keeping the embedding and task modules at the edge devices. The information exchanges between these modules occur over wireless networks. Building on this split framework, the paper provides a rigorous analysis of the upper bound of the convergence gap for the wireless federated LoRA system. This analysis reveals the weighted impact of the number of edge devices participating in FedFT over all rounds, motivating the formulation of a long-term upper bound minimization problem.  To address the long-term constraint, we decompose the formulated long-term mixed-integer programming (MIP) problem into sequential sub-problems using the Lyapunov technique. We then develop an online algorithm for effective device scheduling and bandwidth allocation. Simulation results demonstrate the effectiveness of the proposed online algorithm in enhancing learning performance.
\end{abstract}
\begin{IEEEkeywords}
	Federated learning, pre-trained foundation model, parameter-efficient fine-tuning, resource allocation.
\end{IEEEkeywords}
\section{Introduction}\label{SecIntro}
The rapid advancements in artificial intelligence (AI), particularly in the development of pre-trained foundation models (FMs) such as large language models (LLMs) and large vision models (LVMs), have been truly remarkable. 
Applications like ChatGPT, DALL-E, and LLaMA, which are powered by these pre-trained FMs, have showcased the vast potential of artificial general intelligence (AGI)\cite{10384606, 10183789, 9606720 }.
These groundbreaking AI systems have the ability to tackle a wide range of complex tasks, including video generation, image content summarization, and continuous dialogue\cite{10398474}. This demonstrates the incredible capabilities that AI has attained, and the critical role it can play in supporting various real-world applications.
By fine-tuning (FT) pre-trained FMs on local datasets\cite{pmlr-v108-radiya-dixit20a}, customized LLMs offer specialized services\cite{10558820}, aligning with the need for establishing native artificial intelligence (AI) in the era of 6G\cite{10558825, tian2024edgecloud, sehad2024generative,10558822}.
Unlike conventional model training, FT updates the network parameters of the pre-trained FMs rather than training from scratch, aiming for enhancing learning performance on local datasets\cite{wu2024netllm}. However, full-model FT for pre-trained FMs (e.g., GPT-4 with 1.76 trillion parameters) still incurs a high communication overhead and computational complexity, hindering their practical deployment.

Parameter-efficient fine-tuning (PEFT) has attracted much attention, given its capability of reducing the number of trainable parameters for pre-trained FMs while achieving comparable learning performance with the full-model FT\cite{Zhu_2023_ICCV, NEURIPS2023_a0054803, vos2022towards, 10.1007/978-3-031-43415-0_31, NEURIPS2022_0cde695b, hu2022lora, bu2024differentially}.
One well-known PEFT method is prompt/prefix tuning\cite{Zhu_2023_ICCV, NEURIPS2023_a0054803, vos2022towards, 10.1007/978-3-031-43415-0_31}, which involves adjusting the added prefix of the embedded tensor. 
Despite its theoretical advantages in few-shot training, the performance of prompt/prefix tuning in practice may not be desirable because of the non-monotonic changes in learning performance with respect to the trainable parameters\cite{hu2022lora}.
In addition, adapter tuning inserts learnable adapters with a few linear layers inside the transformer. 
This method, while achieving the desired performance, unfortunately increases the inference delay due to hindering the parallel processing capabilities.
Furthermore, the bit-fit method\cite{bu2024differentially} fine-tunes the bias term of the pre-trained FMs, leading to much smaller computational overhead but yielding less desired effectiveness compared to other methods.

Meanwhile, by exploiting special properties of the parameters of the pre-trained FMs (e.g., sparsity, low-ranking) and representing the associated adjustments with a few parameters, the reparameterization method reduces computational complexity while preserving parallel processing.
In particular, low-rank adaptation (LoRA) is a popular reparameterization method\cite{hu2022lora}, which represents the adjustment of each matrix in the pre-trained FMs with the multiplication of two low-rank matrices. 
LoRA only updates these added low-rank matrices while freezing the original parameters, where the support for parallel processing of low-rank matrices is retained.
Additionally, LoRA can be easily migrated to various pre-trained FMs.
For instance, the authors in \cite{NEURIPS2023_1feb8787} applied LoRA onto quantized LLM to reduce the storage and training expenses.
The authors in \cite{10472574} applied LoRA to a hierarchical LLM, enhancing the reliability under wide-range of learning tasks.
The authors in \cite{Yu_Chen_Zhou_He_2024} combined LoRA and neuron-indexing technique for efficient editing of LLM.
However, despite the resilience of LoRA under different scenarios, sufficient training data is essential to fine-tune a LoRA-based pre-trained FMs for achieving desired learning performance, which is impractical at a single edge device due to the limited datasets available. 

Federated fine-tuning (FedFT) emerges as a promising framework to refine a global model with a vast amount of distributed datasets in a privacy-preserving manner\cite{10558823, 10558816, 10364357, slora2023, sun2024improving}.
For instance, the authors in \cite{slora2023} explored the impact of data heterogeneity on the learning efficiency of LoRA-based FedFT and proposed a data-driven initialization approach to enhance the robustness of the LoRA-based FedFT and reduce the training cost. 
The authors in \cite{sun2024improving} studied the effect of differential privacy on LoRA-based FedFT and proposed a partially frozen federated LoRA framework to bolster training stability. However, coordinating the edge server and edge devices across wireless networks encounters hurdles due to the stochastic nature of the fading channel and limited radio resources for FedFT.

To facilitate efficient high-dimensional model exchange under limited radio resources, various approaches have been explored in the context of conventional wireless FL, focusing on device scheduling\cite{shi2020joint}, power allocation\cite{10032291, 10355909}, beamforming design\cite{8952884}, and bandwidth.
However, the number of model parameters in wireless FedFT is considerably greater than that in conventional FL, where the aforementioned methods can not be directly applied.
Meanwhile, there are few works proposed in the context of wireless FedFT\cite{lyu2024rethinking, wen2024pretraining}.
In particular, in \cite{lyu2024rethinking}, the authors developed a joint pre-training and fine-tuning framework for FMs, where a joint communication and resource allocation design was proposed to balance trade-off between the learning performance, delay and energy consumption.
Meanwhile, the authors in \cite{wen2024pretraining} theoretically analyzed the convergence bound of pre-training under federated meta learning framework and the generalization error with personalized fine-tuning. 
However, in the aforementioned works, the FMs are assumed to fully deployed at edge devices\cite{10558823, lyu2024rethinking, wen2024pretraining}, which is impractical for FedFT due to limited storage and computational capabilities at the edge devices. 
Additionally, their simulation results were based on the conventional models (e.g., convolution neural network) instead of FMs.
Moreover, the convergence analysis for split wireless FedFT (i.e., partial deployment of models at edge devices), however, has not been studied in the literature yet.

In this paper, we propose a split LoRA-based wireless FedFT framework, where the embedding and task modules are deployed on edge devices, and the computationally-intensive encoder is deployed on the edge server. 
Our objective is to devise an efficient algorithm that jointly optimizes device scheduling and bandwidth allocation to enhance the learning performance of the FedFT system, and address the following challenges.
Firstly, the metric for characterizing the convergence behavior of FedFT in terms of device scheduling and bandwidth allocation is implicit, which impedes dedicated optimizations of radio resources.
Secondly, the learning performance of FedFT is influenced by the number of scheduled edge devices across all communication rounds, necessitating a long-term perspective optimization of device scheduling and bandwidth allocation policies.
Thirdly, the coupling of integer-valued device scheduling and continuous-valued bandwidth allocation leads to NP-hardness and non-convexity.
To tackle these challenges, we begin by analyzing the convergence performance of the system under consideration and formulate an online optimization problem. 
By applying the Lyapunov analysis, we decompose the formulated online problem into a series of sequential sub-problems, which are then solved using the proposed online algorithm.
Our contributions can be summarized as follows.
\begin{enumerate}
	\item [$\bullet$]We develop a split LoRA-based FedFT framework over wireless networks by decomposing the pre-trained FMs into the embedding module, the encoder, and the task module. Specifically, we deploy the embedding and task modules at the edge devices, while keeping the computationally expensive encoder at the edge server. Wireless links sequentially connect the embedding module, encoder, and task module for effective forward inference and backward training. Additionally, by exploiting the low-rank property of the LoRA technique, the proposed framework aggregates the gradient of the task module with respect to the output of a low-rank matrix instead of that of the encoder, thereby reducing communication overhead and enhancing privacy.
	\item [$\bullet$]For the first time, we rigorously analyze the convergence behavior of the wireless FedFT framework across different communication rounds, highlighting the growing importance of increasing the number of scheduled edge devices in the gradient aggregation. 
	Motivated by this analysis, we formulate an online convergence upper bound minimization problem, which requires the joint optimization of device scheduling and bandwidth allocation. 
	\item [$\bullet$]To decouple the impacts of scheduling and bandwidth allocation on the convergence behavior across different communication rounds, we apply Lyapunov analysis to reformulate the online problem as a series of sequential optimization sub-problems. 
	To address the non-convex mixed-integer programming in each sub-problem, we propose a set expansion strategy to separately optimize device scheduling and bandwidth allocation. 
	Furthermore, we examine the structure of these policies and prove the $\Delta$-optimality of our proposed online algorithm.
	\item [$\bullet$] Extensive simulation results presented to validate the effectiveness of the proposed algorithm, demonstrating its performance with various datasets for the LLM and the CIFAR-10 dataset for the LVM. 
	Results show that the proposed joint device scheduling and bandwidth allocation algorithm achieves excellent learning performance in both LLM and LVM scenarios.
\end{enumerate}

The remainder of this paper is organized as follows.
Section \ref{SecSys} presents the learning and signal models of the proposed framework. The convergence analysis and problem formulation are given in Section \ref{SecPro}, followed by the development of an online algorithm in Section \ref{SecSol}. Section \ref{SecSim} presents the simulation results, and Section \ref{SecCon} concludes the paper.
\section{System Model}\label{SecSys}
\subsection{Learning Model}
As shown in Fig. \ref{systemfig}, we consider a LoRA-based wireless FedFT framework, where one single-antenna edge server coordinates a set of $K$ single-antenna edge devices, denoted as $\mathcal{K} = \{1, \ldots, K\}$, to fine-tune a global model based on their local datasets.
\begin{figure}[t]
	\centering
	\includegraphics[width=\linewidth]{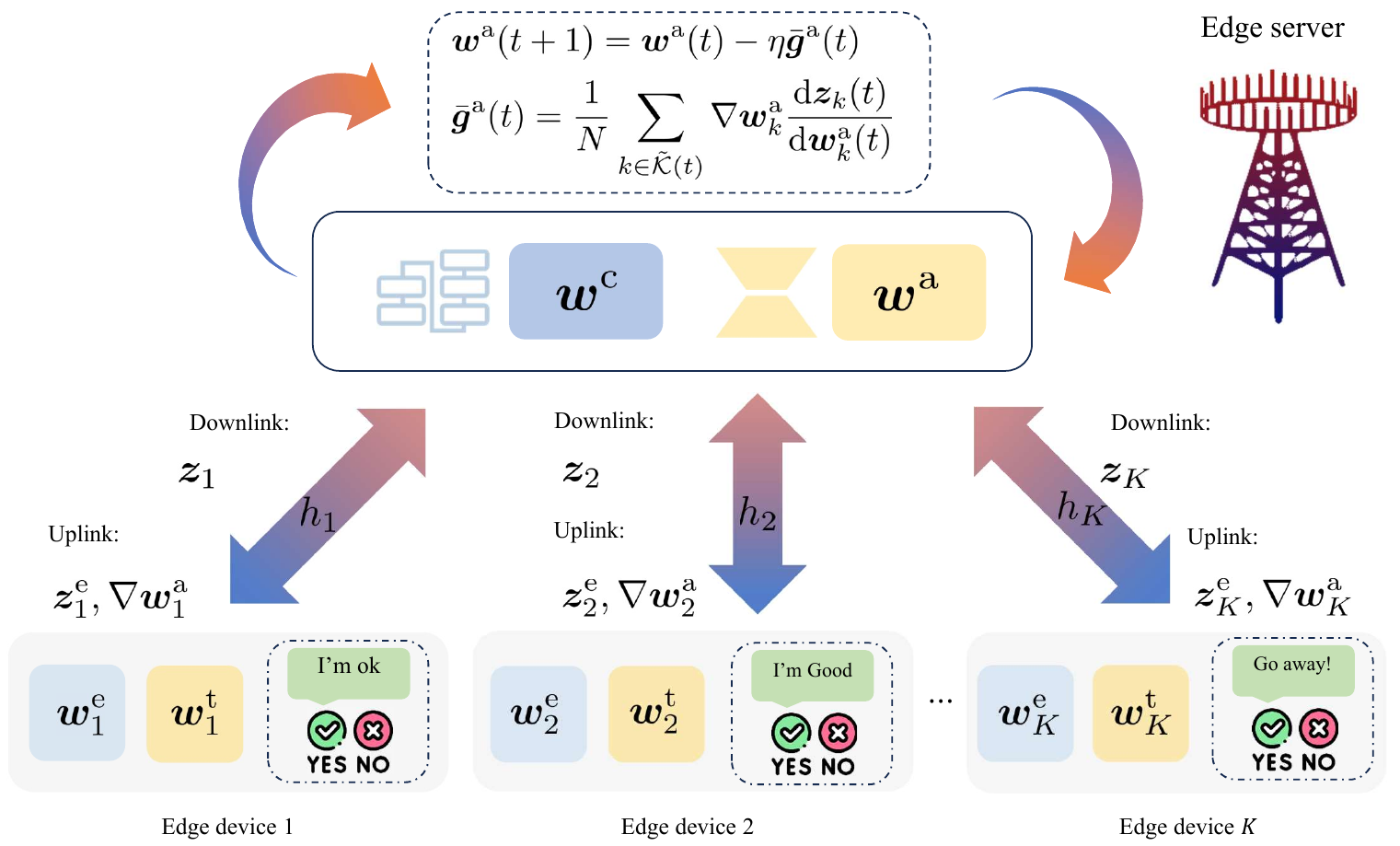}
	\caption{Illustration of the communication process for the proposed FedFT framework.}
	\label{systemfig}
\end{figure}
The local datasets are independent and identically distributed (i.i.d.) and the local dataset of device $k$ consists of $M$ feature-label pairs $\mathcal{D}_k = \{(\bm{x}_{m}, \bm{y}_{m})\}$. 
The objective of FedFT is to find a set of parameters for the task models $\{\bm{w}_{k}^{\rm t}\}_{k\in\mathcal{K}}$ and the low-rank matrices $\bm{w}^{\rm a}$, denoted by $\bm{W} = \{\bm{w}^{\rm a}\}\cup\{\bm{w}^{\rm t}_{k}\;|\;\forall k\in\mathcal{K}\}$ based on the off-the-shelf pre-trained FM $\bm{w}^{\rm f}$ that minimizes the global loss function $F(\bm{W}; \bm{w}^{\rm f}, \{\mathcal{D}_k\})$, i.e.,
\begin{align}\label{EFT_obj}
	\begin{split}
		\bm{W}^{\star} = &\underset{\{\bm{w}^{\rm t}_{k}\}, \bm{w}^{\rm a}}{\arg\min }\;F\left(\bm{W};\bm{w}^{\rm f},\{\mathcal{D}_k\}\right)\\
		=&\frac{1}{K}\sum_{k=1}^{K} f_k\left(\bm{w}^{\rm t}_{k}, \bm{w}^{\rm a};\bm{w}^{\rm f}, \{\mathcal{D}_k\}\right),
	\end{split}
\end{align}
where $f_k(\cdot)$ denotes the local loss function of edge device $k\in\mathcal{K}$.

By splitting the pre-trained FM $\bm{w}^{\rm f}$ into embedding module $\bm{w}^{\rm e}$, encoder $\bm{w}^{\rm c}$, and task module $\bm{w}^{\rm t}$, i.e., $\bm{w}^{\rm f} = \{\bm{w}^{\rm e}, \bm{w}^{\rm c}, \bm{w}^{\rm t}\}$, we deploy the embedding and task modules at the edge devices, while the encoder is most computationally expensive and remains on the edge server.
Additionally, we apply LoRA to the encoder for FedFT, where the original parameters are frozen and a set of low-rank matrices are added to be collaboratively fine-tuned.
Consequently, the forward inference of the proposed wireless FedFT system is performed as follows. 
The local data are encoded by the local embedding module, and then transmitted to the edge server for further feature extraction in the encoder. Subsequently, the corresponding task module post-processes the features for specific task outputs. 
Fig. \ref{process_flow} shows the overall workflow of the considered FedFT system, with detailed forward inference and backward propagation provided in Fig. \ref{forward_inf} and Fig. \ref{backward_Pro}, respectively.
We apply the federated averaging (FedAvg) algorithm to the low-rank matrices. 
Specifically, in each learning epoch, the following steps are sequentially performed.
\begin{figure}[t]
	\centering
	\subfigure[Forward inference]
	{
		\begin{minipage}[t]{\linewidth}
			\centering
			\includegraphics[width=\linewidth]{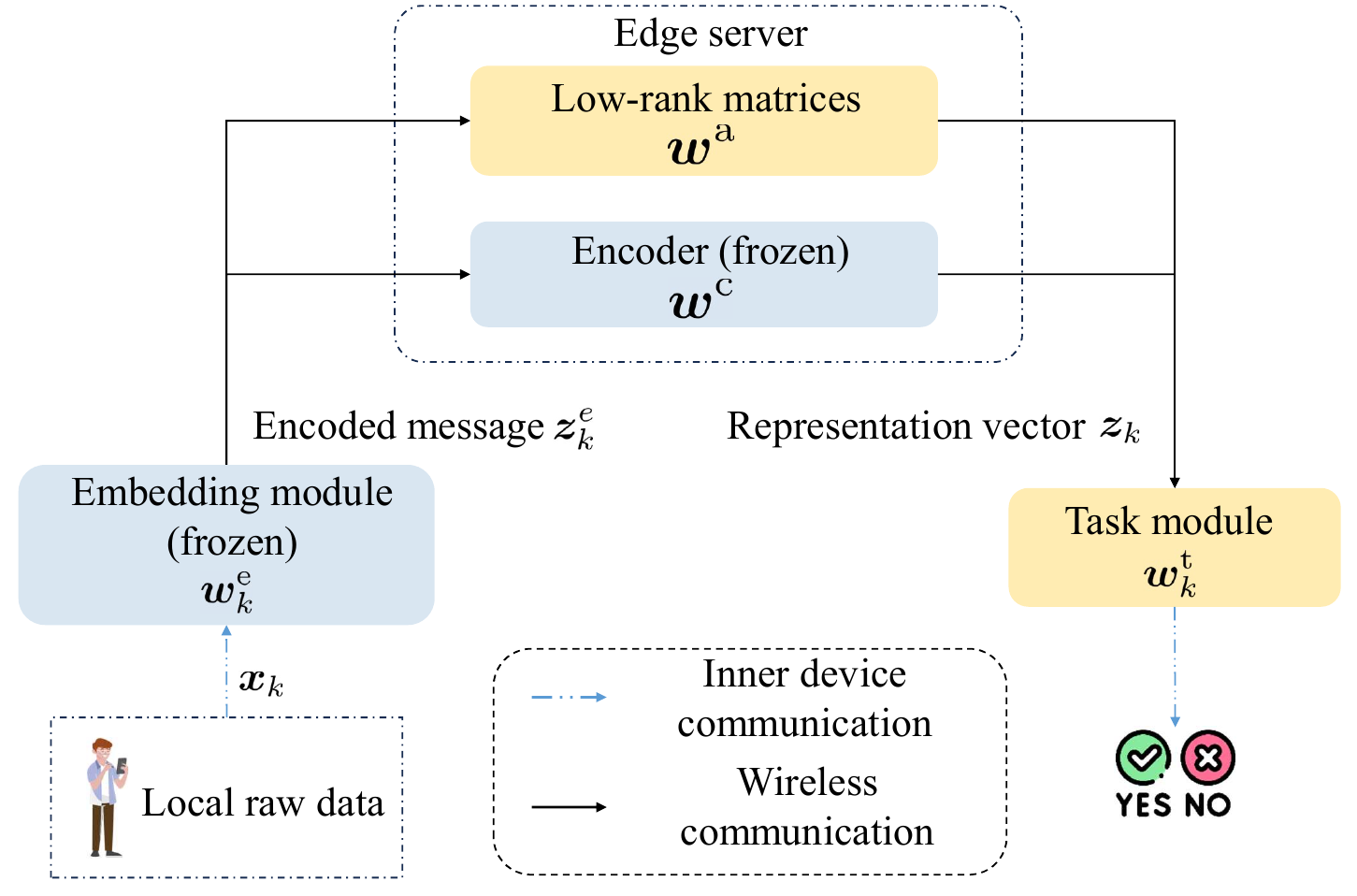}
			\label{forward_inf}
		\end{minipage}
	}
	\\
	\subfigure[Backward propagation]
	{
		\begin{minipage}[t]{\linewidth}
			\centering 
			\includegraphics[width=\linewidth]{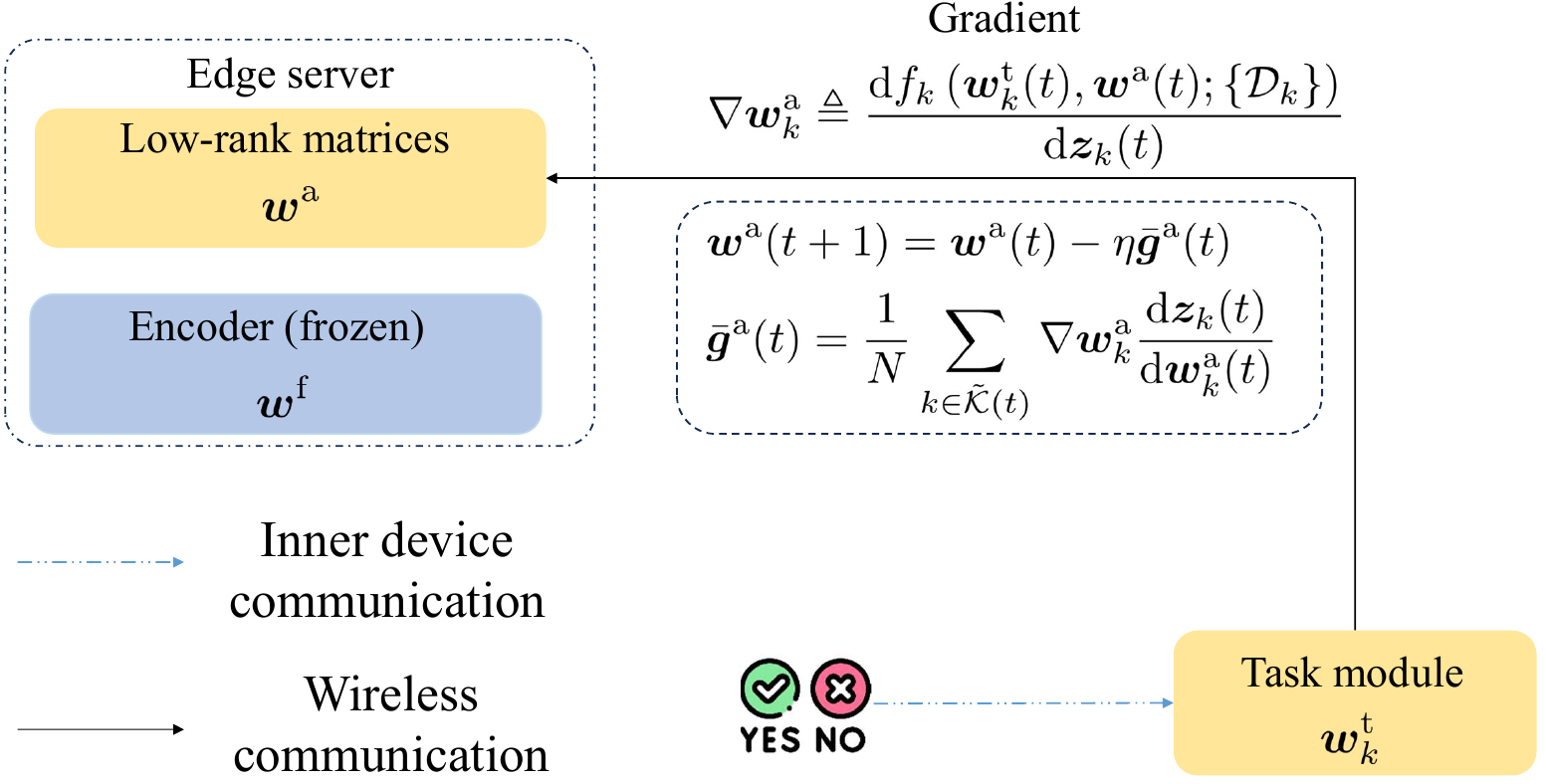}
			\label{backward_Pro}
		\end{minipage}
	}
	\caption{Illustration of forward inference and backward propagation for the considered FedFT system.}
	\label{process_flow}
\end{figure}
\begin{enumerate}
	\item[$\bullet$] \textit{Edge Device Scheduling}: 
	Due to the limited radio resources, the edge server schedules $N$ of $K$ edge devices, denoted by a subset $\tilde{\mathcal{K}}(t)$, to participate in the FedFT of the current round.
	\item[$\bullet$] \textit{Global Feature Representation}: Each scheduled edge device encodes its local feature through $\bm{w}_k^{\rm{e}}$ as $\bm{z}^{\rm e}_k(t)$, which is then transmitted to the edge server. Upon reception, the edge server further calculates the associated feature representation vectors of each encoded message $\{\bm{z}^{\rm e}_k(t)\}$ according to $\bm{w}^{\rm{c}}$ and $\bm{w}^{\rm a}(t)$, denoted as $\{\bm{z}_k(t)\}$, and then transmits them to the corresponding edge devices for local FT.
	\item[$\bullet$] \textit{Local Model Aggregation}: Based on the received feature representation vectors ${\bm{z}_k(t)}$ and local labels, each scheduled edge device computes the gradient of the task module $\bm{w}^{\rm t}_k(t)$ and gradient information of the low-rank matrices $\bm{w}^{\rm{a}}$, denoted as $\bm{g}_k^{\rm t}(t)$ and
	\begin{equation}
		{\nabla} \bm{w}^{\rm{a}}_{k} \triangleq\frac{{\rm{d}}f_k\left(\bm{w}_k^{\rm t}(t), \bm{w}^{\rm a}(t); \{\mathcal{D}_k\}\right)}{{\rm{d}}\bm{z}_k(t)},
	\end{equation} respectively. 
	The task module of each edge device is tailored for a specific task, which may differ from other task models. Meanwhile, the low-rank matrices in the pre-trained FM are designed for general use.
	Therefore, each scheduled edge device updates its own task module as follows
	\begin{align}
		\bm{w}_k^{\rm t}(t+1) &= \bm{w}_k^{\rm t}(t) - \eta \bm{g}_k^{\rm t}(t),
	\end{align}
	and uploads ${\nabla} \bm{w}^{\rm{a}}_{k}$ onto the edge server for performing the FedAvg algorithm. 
	\item[$\bullet$] \textit{Global Model Update}: Following the chain rule, the edge server calculates the averaged gradient of the low-rank matrices as follows
	\begin{equation*}
		\bar{\bm{g}}^{\rm a}(t) = \frac{1}{N}\sum_{k\in\tilde{\mathcal{K}}(t)} {\nabla} \bm{w}^{\rm{a}}_{k} \frac{{\rm{d}}\bm{z}_k(t)}{{\rm{d}}\bm{w}^{\rm{a}}_k(t)},
	\end{equation*}
	which is used for updating the global low-rank matrices
	\begin{align}
		\bm{w}^{\rm a}(t+1) &= \bm{w}^{\rm a}(t) - \eta \bar{\bm{g}}^{\rm a}(t),
	\end{align}
	where $\eta$ denotes the learning rate. 
\end{enumerate}

\subsection{Signal Model}
We consider a frequency division multiple access (FDMA) system.
By denoting $h_{k}(t)$ as the block-fading channel coefficient between the edge server and edge device $k\in\mathcal{K}$, the achievable transmission rate in the uplink can be given by 
\begin{equation}
	r_k(t) = B_k(t) \log_2\left(1+\frac{|h_k(t)|^2}{B_k(t)\sigma^2}\right),\forall\;k\in\tilde{\mathcal{K}}(t),
\end{equation}
where $B_k(t)\geq 0$ denotes the allocated bandwidth for edge device $k\in\tilde{\mathcal{K}}(t)$, and $\sigma^2$ denotes the power spectral density of additive white Gaussian noise (AWGN). 
In particular, a total bandwidth $B$ is shared by all the scheduled edge devices, i.e.,
\begin{equation}\label{sum_bandwidth}
	B = \sum_{k\in\tilde{\mathcal{K}}(t)}B_k(t).
\end{equation}
The transmission delay among all scheduled edge devices is given by
\begin{equation}
	D(t) =\underset{k\in\tilde{\mathcal{K}}(t)}{\max}\;\frac{\mu}{r_k(t)}, 
\end{equation}
where $\mu$ denotes the total length in bits of the transmitted symbols for each scheduled edge device.
Besides, we consider an average transmission delay constraint as in \cite{9606731}, i.e.,
\begin{equation}\label{averdelay}
	\frac{1}{T}\sum_{t=1}^{T}D(t) \leq \bar{D}.
\end{equation}
It is worth noting that the transmitted symbols include the embedding message, the associated feature representation vectors, and the backward gradient information. Additionally, $\mu$ is jointly determined by the size of the training batch and the rank of added matrices, where a larger batch size results in higher transmission overhead.

\section{Problem Formulation}\label{SecPro}
In this section, we derive the upper bound of the optimality gap between the global loss under an arbitrary scheduling policy and the ideal case for split LoRA-based wireless FedFT system, and formulate an upper bound minimization problem.
\subsection{Assumptions and Convergence Results}
To begin with, we make several widely-adopted assumptions to facilitate the theoretical analysis.
\begin{assumption}\label{ass-1}
    There exists a set of parameters $\bm{W}^{\star}$ that achieves the global minimum of the global loss function
    \begin{equation}
        F(\bm{W}^{\star})\leq F(\bm{W}), \forall\; \bm{W}.
    \end{equation}
\end{assumption}
\begin{assumption}\label{ass-2}
The loss function $f_i(\cdot)$ is non-convex and $L$-smooth, i.e., 
\begin{align}
    \Vert\nabla f_{i}(\bm{x})-\nabla f_{i} (\bm{x}^{\prime})\Vert_2\leq L\Vert\bm{x}-\bm{x}^{\prime}\Vert_2, \; L>0.
\end{align}
\end{assumption}
\begin{assumption}\label{ass-3}
    The local gradient $\bm{g}^{\rm a}_k(t)$ is an unbiased estimate of $\nabla f_k(\tilde{\bm{w}}_{k}(t))$, i.e.,
    \begin{align}
        \mathbb{E}\left[\bm{g}^{\rm a}_k(t)\right] &= \nabla f_k(\tilde{\bm{w}}_{k}(t)).
    \end{align}
\end{assumption}
\begin{assumption}\label{ass-4}
The variance of all entries of $\bm{g}^{\rm a}_k(t)$ and $\bm{g}_{k}^{\rm t}(t)$ are upper bounded by a constant $\phi^2\geq 0$.
\end{assumption}
\begin{assumption}\label{ass-5}
	There exists a constant $\tau\geq 0$ such that the Polyak-Łojasiewicz inequality holds for $F(\bm{W})$, i.e., 
	\begin{equation*}
		\frac{1}{2}\Vert\nabla F(\bm{W})\Vert^2_2 \geq \tau \left(F(\bm{W}) - F(\bm{W}^*)\right).
	\end{equation*}
\end{assumption}
\begin{lemma}\label{lmm1}
 Given the objective in \eqref{EFT_obj}, the convergence behavior of $F(\bm{W};\{\mathcal{D}_k\})$ can be separately optimized with respect to the task module $\bm{w}^{\rm t}$ and the added low-rank matrices $\{\bm{w}^{\rm a}_k\}$,
    \begin{equation}\label{EFFT_obj_sep} 
        \begin{split}
        \left\Vert\nabla F(\bm{W})\right\Vert^2\leq
        \frac{2}{K^2}&\left(\underset{\text{Task module}}{\underbrace{\sum_{k\in\mathcal{K}}\left\Vert\nabla_{\bm{w}^{\rm t}}f_k(\bm{w}^{\rm t};\bm{w}_k^{\rm a})\right\Vert^2}}\right.\\
        &\left.+\underset{\text{Added low-rank matrices}}{\underbrace{\sum_{k\in\mathcal{K}}\!\left\Vert\nabla_{\bm{w}^{\rm a}_{k}}f_k(\bm{w}^{\rm a}_k; \bm{w}^{\rm t})\right\Vert^2}}\right).
        \end{split}
    \end{equation}
\end{lemma}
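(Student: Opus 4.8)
The plan is to exploit the separable block structure of the parameter vector $\bm{W}=\{\bm{w}^{\rm a}\}\cup\{\bm{w}^{\rm t}_k\}$ together with the fact that $F$ is a plain average of the per-client losses $f_k$. First I would write the global gradient over the stacked variable $\bm{W}$ as $\nabla F(\bm{W})=\frac{1}{K}\sum_{k\in\mathcal{K}}\nabla f_k$, and decompose each per-client gradient into its task-module block $\nabla_{\bm{w}^{\rm t}}f_k$ and its low-rank block $\nabla_{\bm{w}^{\rm a}_k}f_k$. The key structural observation is that client $k$'s task module $\bm{w}^{\rm t}_k$ (and its local copy of the low-rank matrix in the sense of Assumption \ref{ass-3}) enters only $f_k$, so the blocks belonging to distinct clients occupy disjoint coordinates of $\bm{W}$. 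Consequently the cross terms in the squared norm vanish and the averaging constant $\frac{1}{K}$ is squared, which is what produces the $\frac{1}{K^2}$ prefactor rather than a looser $\frac{1}{K}$ factor.

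Next I would separate the task-module contribution from the low-rank contribution. Writing $\nabla F(\bm{W})$ as the sum of a \emph{task part} and a \emph{low-rank part} and invoking the elementary inequality $\Vert \bm{a}+\bm{b}\Vert^2\le 2\Vert\bm{a}\Vert^2+2\Vert\bm{b}\Vert^2$ supplies the factor of $2$ in \eqref{EFFT_obj_sep} and decouples the two groups of parameters. Applying the per-client disjointness to each part then converts the squared norm of a $\frac{1}{K}$-scaled sum into $\frac{1}{K^2}$ times the sum of the individual squared block norms, yielding exactly $\frac{2}{K^2}\bigl(\sum_{k}\Vert\nabla_{\bm{w}^{\rm t}}f_k\Vert^2+\sum_{k}\Vert\nabla_{\bm{w}^{\rm a}_k}f_k\Vert^2\bigr)$. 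This also makes precise the lemma's claim that the convergence behaviour can be \emph{separately} optimized over the task module and the added low-rank matrices, since the right-hand side is a sum of two independent nonnegative groups.

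The main obstacle I anticipate is the bookkeeping of which parameters are shared across clients versus held locally, and the correct propagation of the averaging factor through the norm. In particular, the low-rank matrices are aggregated via FedAvg whereas each task module is local, so I must be careful about whether a single shared $\bm{w}^{\rm a}$ or the per-client copies $\bm{w}^{\rm a}_k$ enter the squared-norm decomposition; treating the local copies $\bm{w}^{\rm a}_k$ consistently is precisely what makes the disjoint-coordinate argument go through and preserves the clean $\frac{1}{K^2}$ scaling. I would also verify that the chain-rule coupling between $\bm{w}^{\rm t}$ and $\bm{w}^{\rm a}$ through the shared feature representation $\bm{z}_k$ introduces no cross terms beyond those already absorbed by the factor-$2$ inequality, which is the step most likely to require care.
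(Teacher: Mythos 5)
Your block-decomposition setup matches the paper's first step, but your key structural claim is wrong for half of the parameters, and it is the half that matters. In this framework $\bm{w}^{\rm a}$ is a \emph{single shared} set of low-rank matrices ($\bm{W}=\{\bm{w}^{\rm a}\}\cup\{\bm{w}^{\rm t}_k\}$, with FedAvg aggregating all clients' gradients into one global update); there are no per-client copies of $\bm{w}^{\rm a}$ occupying disjoint coordinates of $\bm{W}$. Hence only the task-module blocks decouple across clients. The paper's proof accordingly arrives at the exact identity
\begin{equation*}
\Vert\nabla F(\bm{W})\Vert^2=\frac{1}{K^2}\left\Vert\sum_{k\in\mathcal{K}}\nabla_{\bm{w}^{\rm a}}f_k\right\Vert^2+\frac{1}{K^2}\sum_{k\in\mathcal{K}}\left\Vert\nabla_{\bm{w}^{\rm t}_k}f_k\right\Vert^2,
\end{equation*}
in which the adapter contribution is a norm of a sum whose cross terms $\langle\nabla_{\bm{w}^{\rm a}}f_k,\nabla_{\bm{w}^{\rm a}}f_j\rangle$ do not vanish. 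Bounding that norm-of-sum by a sum of per-client squared norms is precisely the nontrivial content of the lemma's inequality --- it is where the paper's final step and its factor $2$ live --- and your proof never engages with it, because your disjointness premise makes the difficulty disappear by fiat.

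Two further symptoms of the mismatch. First, your factor of $2$ comes from applying $\Vert\bm{a}+\bm{b}\Vert^2\le 2\Vert\bm{a}\Vert^2+2\Vert\bm{b}\Vert^2$ to the task/adapter split, but that split \emph{is} an orthogonal decomposition over disjoint coordinate blocks of $\bm{W}$, so it holds with equality and needs no inequality at all (the paper indeed writes an equality there); your factor of $2$ is therefore produced in the wrong place. Second, if your premise were correct, you would land on the strictly stronger bound $\frac{1}{K^2}\sum_{k}\bigl(\Vert\nabla_{\bm{w}^{\rm t}}f_k\Vert^2+\Vert\nabla_{\bm{w}^{\rm a}_k}f_k\Vert^2\bigr)$ with no factor $2$ anywhere, which should have flagged that you were proving a different statement than the lemma. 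Note finally that the step you would actually need --- passing from $\Vert\sum_{k}\nabla_{\bm{w}^{\rm a}}f_k\Vert^2$ to $2\sum_{k}\Vert\nabla_{\bm{w}^{\rm a}_k}f_k\Vert^2$ --- does not follow from Cauchy--Schwarz alone (that yields a factor $K$, not $2$), so a complete argument must supply additional justification for this bound rather than assume the cross terms away.
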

\begin{proof}
    See Appendix A.
\end{proof}
According to Lemma \ref{lmm1} and the Assumptions \ref{ass-1}-\ref{ass-5}, we have the following theorem.
\begin{thm}\label{thm1}
    Given an arbitrary device scheduling policy, when $L<\frac{\eta }{\eta^2+1}$,
   the optimality gap after $T+1$ communication rounds is upper bounded by 	\begin{equation}\label{Opt_Gap}
	    \begin{split}
	        &F(\bm{W}(T+1))- F(\bm{W}^{\star})\\
	        \leq&\left(1-2\tau \varsigma(T)\right)(F(\bm{W}(T)) - F(\bm{W}^\star))+\beta-\alpha(T)\\
	        =&(1-2\tau \varsigma(T))\left[F(\bm{W}(T)) - F(\bm{W}(T-1)) \right.\\
	        &\left.+ F(\bm{W}(T-1)) -   F(\bm{W}^\star)\right]+\beta-\alpha(T)\\
	        \leq&\underset{\text{Initial gap}}{\underbrace{\left(\prod_{i=0}^{T}(1-2\tau \varsigma(i))\right)F(\bm{W}(0)) -F(\bm{W}^\star)}} \\
	    &-\underset{\text{LoRA related gap}}{\underbrace{ \sum_{i=1}^{T}\left(\prod_{j=0}^{i-1}\left(1-2\tau \varsigma(T-j)\right) \right)\alpha(T-i)-\alpha(T)}}\\
	    &+\underset{\text{Task module related gap}}{\underbrace{\beta\left(1+\sum_{i=1}^{T}\left(\prod_{j=0}^{i-1}\left(1-2\tau \varsigma(T-j)\right) \right)\right)}}
	    \end{split}
	\end{equation}
    where 
        \begin{align}
            &\varsigma(t)=-\frac{L\eta^2-\eta}{K^2}-\frac{(K-{N(t)})L}{2{N(t)}(K-1)K^2},\label{varisigma_conv}\\
            &\alpha(t)=\phi^2K^2\varsigma(t)\Omega^{\rm a}, \label{alpha_conv}\\
            &\beta = \Omega^{\rm t}\left(L\eta^2 
            -\eta\right)\phi^2 \label{beta_conv}.
        \end{align}
        $\Omega^{\rm a}$ and $\Omega^{\rm t}$ denote the number of elements in $\bm{g}^{\rm a}_k(t)$ and $\bm{g}_{k}^{\rm t}(t)$, respectively.
\end{thm}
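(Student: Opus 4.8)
The plan is to prove the first inequality of \eqref{Opt_Gap} — a one-step contraction of the optimality gap — and then unroll it into the telescoped bound. First, I would apply the $L$-smoothness of $F$ (Assumption \ref{ass-2}) to obtain the descent inequality
\[
F(\bm{W}(t+1)) \leq F(\bm{W}(t)) + \left\langle \nabla F(\bm{W}(t)), \bm{W}(t+1)-\bm{W}(t)\right\rangle + \frac{L}{2}\left\Vert \bm{W}(t+1)-\bm{W}(t)\right\Vert^2,
\]
and substitute the two update rules, namely the per-device task step $\bm{w}_k^{\rm t}(t+1)=\bm{w}_k^{\rm t}(t)-\eta\bm{g}_k^{\rm t}(t)$ and the aggregated low-rank step $\bm{w}^{\rm a}(t+1)=\bm{w}^{\rm a}(t)-\eta\bar{\bm{g}}^{\rm a}(t)$. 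Since Lemma \ref{lmm1} decouples $\Vert\nabla F\Vert^2$ into a task-module block and a low-rank block, I would carry the two blocks through the analysis separately and recombine via the $\frac{2}{K^2}$ prefactor at the end.

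Next I would take expectation over both the stochastic gradients and the random scheduled set $\tilde{\mathcal{K}}(t)$. Unbiasedness (Assumption \ref{ass-3}) turns the inner-product term into a negative multiple of $\Vert\nabla F\Vert^2$, while the bounded per-coordinate variance $\phi^2$ (Assumption \ref{ass-4}), summed over the $\Omega^{\rm t}$ task-module coordinates and the $\Omega^{\rm a}$ low-rank coordinates, controls the second-moment terms. \emph{The main obstacle is the variance of the aggregated low-rank gradient} $\bar{\bm{g}}^{\rm a}(t)=\frac{1}{N(t)}\sum_{k\in\tilde{\mathcal{K}}(t)}\bm{g}^{\rm a}_k(t)$: because only $N(t)$ of $K$ devices are sampled, its variance carries the finite-population correction $\frac{K-N(t)}{N(t)(K-1)}$, which is exactly the device-dependent second term of $\varsigma(t)$ in \eqref{varisigma_conv} and the reason the bound tightens as $N(t)$ grows. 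Collecting the coefficient of $\Vert\nabla F\Vert^2$ yields $\varsigma(t)$, the low-rank second-moment contribution yields $\alpha(t)$ in \eqref{alpha_conv}, and the task-module second-moment contribution yields $\beta$ in \eqref{beta_conv}. Here the hypothesis $L<\frac{\eta}{\eta^2+1}$ guarantees $L\eta^2-\eta<0$, so the leading term $-\frac{L\eta^2-\eta}{K^2}$ of $\varsigma(t)$ is positive and the resulting map is a genuine contraction.

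Finally, with $\varsigma(t)>0$ established I would invoke the Polyak-\L{}ojasiewicz inequality (Assumption \ref{ass-5}) to replace $-\varsigma(t)\Vert\nabla F(\bm{W}(t))\Vert^2$ by $-2\tau\varsigma(t)\left(F(\bm{W}(t))-F(\bm{W}^\star)\right)$, producing the one-step recursion
\[
F(\bm{W}(t+1))-F(\bm{W}^\star)\leq\left(1-2\tau\varsigma(t)\right)\left(F(\bm{W}(t))-F(\bm{W}^\star)\right)+\beta-\alpha(t).
\]
Unrolling this from round $0$ to $T$ then telescopes the contraction factors into the product $\prod_{i=0}^{T}\left(1-2\tau\varsigma(i)\right)$ multiplying the initial gap, while reindexing the accumulated $\beta-\alpha(i)$ terms produces the nested products $\prod_{j=0}^{i-1}\left(1-2\tau\varsigma(T-j)\right)$ weighting the per-round $\alpha(T-i)$ and $\beta$ contributions, thereby recovering the three labeled terms (initial gap, LoRA-related gap, task-module-related gap) of \eqref{Opt_Gap}. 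This last telescoping step is routine bookkeeping, its only subtlety being the shifted summation indices.
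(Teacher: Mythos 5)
Your proposal matches the paper's own proof in all essentials: a smoothness-based descent step, unbiasedness over both the stochastic gradients and the randomly sampled set $\tilde{\mathcal{K}}(t)$, the without-replacement sampling variance with finite-population correction $\frac{K-N(t)}{N(t)(K-1)}$ (which the paper derives via a virtual full-participation iterate $\bm{v}^{\rm a}(t+1)$ and a combinatorial computation over indicator functions), the per-coordinate variance bound $\phi^2$ scaled by $\Omega^{\rm a}$ and $\Omega^{\rm t}$, conversion to $-\varsigma(t)\Vert\nabla F(\bm{W}(t))\Vert^2$ via Lemma \ref{lmm1}, the PL inequality, and the unrolled recursion. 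One minor imprecision: $L\eta^2-\eta<0$ alone does not make $\varsigma(t)$ positive, since the term $\frac{(K-N(t))L}{2N(t)(K-1)K^2}$ must also be dominated, but the hypothesis $L<\frac{\eta}{\eta^2+1}$ — the worst case $N(t)=1$ of the paper's condition $L<\frac{\eta N(t)(K-1)}{(K-1)N(t)\eta^2+K-N(t)}$ — does guarantee $\varsigma(t)>0$ for every $N(t)\geq 1$, so your conclusion stands.
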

According to Theorem \ref{thm1}, we have the following observations:
\begin{enumerate}
	\item[$\bullet$]  \textbf{Diminishing initial gap}: 
	When 
	\begin{equation*}
		L<\frac{\eta {N(t)}(K-1)}{(K-1){N(t)}\eta^2+K-{N(t)}}<\frac{\eta }{\eta^2+1},
	\end{equation*}
	the positive weight coefficient  $\prod_{i=0}^{T}(1-2\tau \varsigma(i))$ is exponentially decaying with respect to the number of communication rounds. As a result, the impact of the initial gap vanishes as $T\to \infty$.
	
	\item [$\bullet$] \textbf{Separability of the optimality gap}: The optimality gap can be divided into the initial gap, the LoRA-related gap, and the task module-related gap. Despite the initial gap diminishes as the number of communication rounds increases, the optimality gap is jointly determined by the Lipschitz constant $L$, learning rate $\eta$, gradient size of the task module $\Omega^{\rm t}$, and the added low-rank matrices $\Omega^{\rm a}$, the number of scheduled edge devices $N$ and the total number of edge devices $K$, variance of the local gradient $\phi^2$, and number of communication rounds $T$. Enlarging the mini-batch size in local training reduces the value of $\phi^2$, thereby enhancing convergence performance.
	Furthermore, deploying a smaller task model reduces the task module-related gap by decreasing $\beta$, thus improving the convergence performance of the FedFT system. Once these parameters are fixed, the optimality gap is dominated by the weighted accumulated terms, including the LoRA-related gap and the task module-related gap, where the weighted coefficient $\prod_{j=0}^{i-1}\left(1-2\tau \varsigma(T-j)\right)$ depends on the number of scheduled edge devices. This motivates us to optimize scheduling and resource allocation policies for enhanced convergence performance.
	\item [$\bullet$] \textbf{Importance of increasing the number of scheduled edge devices}: 
	As observed in \eqref{varisigma_conv}, $\varsigma(t)$ is a monotonically increasing function with respect to $N(t)$. Thus, $\prod_{j=0}^{i-1}\left(1-2\tau \varsigma(T-j)\right)$ can be minimized by optimizing $\{N(t)\}_{t=1}^{T}$. Specifically, given the learning rate $\eta$ and Lipschitz constant $\tau$, enlarging $N(t)$ reduces the value of $1-2\tau \varsigma(t)$, where $N(t)$ is jointly determined by scheduling and resource allocation policies.
	Moreover, the weighted coefficients $\prod_{j=0}^{i-1}\left(1-2\tau \varsigma(T-j)\right)$ in later communication rounds are more dominant than those in initial rounds. Therefore, the convergence performance of the FedFT system should be enhanced by optimizing scheduling and bandwidth allocation policies from a long-term perspective.
\end{enumerate}

Built upon the above observations, we shall develop an efficient online optimization algorithm that minimizes the optimality gap from a long-term perspective. By eliminating the diminishing and constant terms in \eqref{Opt_Gap}, the corresponding optimization problem can be formulated as follows
\begin{subequations}\label{Pro_0}
    \begin{align}
       \underset{\tilde{\mathcal{K}}(t), \{B_k(t)\}}{\max}\;&\sum_{i=1}^{T}\left(\prod_{j=0}^{i-1}\left(1-2\tau \varsigma(T-j)\right) \right)\alpha(T-i)-\alpha(T)\\
        \st{}\;\quad& \eqref{sum_bandwidth},\eqref{averdelay},\\
        & 0\leq B_k(t)\leq B,\;\forall\;k=1,\ldots,N(t).
    \end{align}
\end{subequations}
Resolving Problem \eqref{Pro_0} is challenging for the following reasons. 
First, due to the average transmission delay constraint and the accumulated weighted LoRA-related gaps in the objective function,
Problem \eqref{Pro_0} shall be resolved from a long-term perspective.
Second, the coupling of $\tilde{\mathcal{K}}(t)$ and $\{B_k(t)\}$ results in the non-convexity of Problem \eqref{Pro_0}.
To address these challenges, we propose transforming Problem \eqref{Pro_0} into a series of online problems using Lyapunov analysis, and then develop an online algorithm for the joint optimization of bandwidth allocation $\{B_k(t)\}$ and device scheduling $\tilde{\mathcal{K}}(t)$, maximizing the convergence performance of the FedFT system.

\section{Proposed Method}\label{SecSol}
In this section, we transform the long-term optimization Problem \eqref{Pro_0} into a series of deterministic sub-problems via Lyapunov analysis. Then, we develop an efficient online algorithm for joint optimization of bandwidth allocation and device scheduling.

\subsection{Problem Decomposition via Lyapunov Analysis}\label{Sec_Lya_ana}
To address the time average constraint \eqref{averdelay}, we transform \eqref{averdelay} into a virtual queue based constant via the Lyapunov analysis\cite{9152999, 9449944}, which enables us to represent the long-term dynamics of transmission delay with the following virtual queue based update equation
\begin{subequations}\label{vir_que}
    \begin{align}
        \hat{D}(t) &= \max\left\{0, \hat{D}(t-1)+D(t)-\bar{D}(t) \right\},\\
        \hat{D}(0) &= 0,
    \end{align}
\end{subequations} 
where $\hat{D}(t)$ denotes the virtual queue with respect to the transmission delay $D(t)$.
Note that pursuing the asymptotic stabilization of $\hat{D}(t)$ ensures the satisfaction of \eqref{averdelay}, where $\hat{D}(t)$ fluctuates due to the randomness of wireless channel.
Thus, we introduce the Lyapunov function $V\left(\hat{D}(t)\right)$ and Lyapunov drift $ \Delta V\left(\hat{D}(t)\right)$ to monitor the fluctuation of $\hat{D}(t)$ as follows
\begin{align}
    V\left(\hat{D}(t)\right) =& \frac{1}{2}\hat{D}^2(t),\label{lya_fun}\\
    \Delta V\left(\hat{D}(t)\right) = &\mathbb{E}\left[V\left(\hat{D}(t+1)\right) - V\left(\hat{D}(t)\right)\;|\; \hat{D}(t)\right], \label{lya_drift}
\end{align}
where $V\left(\hat{D}(t)\right)$ is a quadratic function with respect to $\hat{D}(t)$ representing the accumulated delay, and $\Delta V\left(\hat{D}(t)\right)$ denotes the changes between the current and next communication round. 
Minimizing \eqref{lya_drift} leads to the stabilization of \eqref{vir_que}, but $V\left(\hat{D}(t)\right)$ is unknown to \eqref{lya_drift} at the $t$-th communication round, which hinders the resource allocation and selection for size of gradient information at each communication round.
To circumvent this problem, we approximate \eqref{lya_drift} with the following lemma
\begin{lemma}\label{lmm2}
    The upper bound of \eqref{lya_drift} at the $t$-th communication round can be expressed by
    \begin{equation}
        \Delta V\left(\hat{D}(t)\right)\leq \hat{D}(t)\left(D(t)-\bar{D}(t)\right).
    \end{equation}
\end{lemma}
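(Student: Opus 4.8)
The plan is to collapse the quadratic Lyapunov function \eqref{lya_fun} into the stated linear drift by exploiting the non-expansiveness of the $\max\{0,\cdot\}$ projection that defines the virtual queue \eqref{vir_que}. Following the paper's indexing, I would write the one-step recursion as $\hat{D}(t+1)=\max\left\{0,\hat{D}(t)+D(t)-\bar{D}(t)\right\}$, substitute it into $\Delta V(\hat{D}(t))$ from \eqref{lya_drift}, and arrange the squared term so that the leading $\hat{D}^2(t)$ cancels against $-V(\hat{D}(t))$, leaving precisely the cross term $\hat{D}(t)(D(t)-\bar{D}(t))$ that appears on the right-hand side of the claim.

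First I would invoke the elementary inequality $\left(\max\{0,x\}\right)^2\le x^2$, which holds for every real $x$, to obtain
\begin{equation*}
\hat{D}^2(t+1)\le\left(\hat{D}(t)+D(t)-\bar{D}(t)\right)^2.
\end{equation*}
Expanding the right-hand side and halving yields
\begin{equation*}
V(\hat{D}(t+1))-V(\hat{D}(t))\le\hat{D}(t)\left(D(t)-\bar{D}(t)\right)+\frac{1}{2}\left(D(t)-\bar{D}(t)\right)^2.
\end{equation*}
Taking the conditional expectation $\mathbb{E}\left[\,\cdot\mid\hat{D}(t)\right]$ as in \eqref{lya_drift} and using the fact that $\hat{D}(t)$ is measurable with respect to the conditioning information, so that it factors out of the expectation, isolates the cross term $\hat{D}(t)(D(t)-\bar{D}(t))$ that matches the target.

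The hard part will be the residual second-order increment $\frac{1}{2}(D(t)-\bar{D}(t))^2$: it is nonnegative, so it cannot be discarded outright from an upper bound, and the crux of the argument is to show that it does not contribute to the stated linear drift. I would control it through the boundedness of the per-round delay — the finite symbol length $\mu$ together with the finite shared bandwidth $B$ guarantees that $D(t)=\max_{k\in\tilde{\mathcal{K}}(t)}\mu/r_k(t)$ is bounded, while $\bar{D}(t)$ is a fixed budget — so that $(D(t)-\bar{D}(t))^2$ is uniformly bounded and its presence does not affect the per-round argmin in the subsequent drift-plus-penalty minimization. Establishing that this bounded increment can be absorbed so that the drift reduces exactly to $\Delta V(\hat{D}(t))\le\hat{D}(t)(D(t)-\bar{D}(t))$ is the delicate step that completes the proof.
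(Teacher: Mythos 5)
Your first two displays are correct and constitute the standard queue-drift computation: the non-expansiveness $\left(\max\{0,x\}\right)^2\le x^2$, expansion, halving, and conditioning give
\begin{equation*}
\Delta V\left(\hat{D}(t)\right)\;\le\;\hat{D}(t)\,\mathbb{E}\left[D(t)-\bar{D}(t)\,\middle|\,\hat{D}(t)\right]\;+\;\frac{1}{2}\,\mathbb{E}\left[\left(D(t)-\bar{D}(t)\right)^2\,\middle|\,\hat{D}(t)\right].
\end{equation*}
But the gap you flag at the end is real and cannot be closed: the residual $\frac{1}{2}\left(D(t)-\bar{D}(t)\right)^2$ is strictly positive whenever $D(t)\neq\bar{D}(t)$, so the inequality of Lemma \ref{lmm2} as literally stated fails in general. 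A one-line counterexample: the paper initializes $\hat{D}(0)=0$, so at $t=0$ the claimed bound is $0$; yet if $D(0)>\bar{D}(0)$ then $\hat{D}(1)=D(0)-\bar{D}(0)>0$ and the drift equals $\frac{1}{2}\left(D(0)-\bar{D}(0)\right)^2>0$. No absorption argument can make a nonnegative, decision-dependent term vanish from an upper bound. What your bounded-residual observation actually proves is the standard drift inequality $\Delta V\left(\hat{D}(t)\right)\le\hat{D}(t)\left(D(t)-\bar{D}(t)\right)+C$, where $C$ uniformly bounds $\frac{1}{2}\left(D(t)-\bar{D}(t)\right)^2$ via $D(t)\le D_{\max}$; and this weaker statement is all that the drift-plus-penalty reformulation \textbf{P1} requires, since an additive constant independent of $\tilde{\mathcal{K}}(t)$ and $\{B_k(t)\}$ does not alter the per-round maximizer. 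You should state and prove that version rather than hunt for the ``delicate step'' — it does not exist.

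For comparison, the paper's own Appendix C reaches the printed inequality only through an algebra slip: it expands $\frac{1}{2}\left[\left(\hat{D}(t)+D(t)-\bar{D}(t)\right)^2-\hat{D}^2(t)\right]$ as $\hat{D}(t)\left(D(t)-\bar{D}(t)\right)-D(t)\bar{D}(t)$, silently discarding the nonnegative quantity $\frac{1}{2}\left(D^2(t)+\bar{D}^2(t)\right)$, and then drops $-D(t)\bar{D}(t)\le 0$ in step $(d)$. The correct expansion leaves exactly the $+\frac{1}{2}\left(D(t)-\bar{D}(t)\right)^2$ term you isolated, dropped there in the wrong direction. So your derivation is the rigorous version of what the paper intends; the defect lies in the lemma's statement, not in your first two steps, and the honest conclusion is the bound with the additive constant $C$.
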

\begin{proof}
    See Appendix C.
\end{proof}

Building upon Lemma \ref{lmm2}, we rewrite Problem \eqref{Pro_0} by applying drift-and-penalty \cite{9152999} as follows
    \begin{align}\label{Pro_1}
        \textbf{P1}\underset{\tilde{\mathcal{K}}(t), \{B_k(t)\}}{\max}\;&\mathcal{J}( \tilde{\mathcal{K}}(t), \{B_k(t)\})\triangleq N(t) - \zeta(t)\hat{D}(t)D(t)\notag\\
        \st{}\quad\quad& \eqref{sum_bandwidth}, 0\leq B_k(t)\leq B,\;\forall\;k=1,\ldots,N,
    \end{align}
where $\zeta(t)$ denotes a descending positive control parameter that adaptively handles the tradeoff between maximizing the number of edge devices participating in FedFT and minimizing the transmission delay of all scheduled edge devices.
It is noteworthy that the penalty term (i.e., $\zeta(t)\hat{D}(t)D(t)$) increases when the transmission delay at communication round $t$ is larger than the average transmission delay, which in turn increases the weight of the penalty term in the future communication rounds and verse versa.
Note that Problem \textbf{P1} is a NP-hard mixed-integer programming (MIP) problem due to the discrete-valued device scheduling and size of gradient information and continuous-valued bandwidth allocation.

\subsection{Online Optimization}\label{Sec_Online_opt}
To maximize the number of scheduled edge devices while stabilizing the virtual queue from a long-term perspective, we develop an online algorithm to jointly optimize the scheduling set and bandwidth at each communication round.
We adopt the set expansion strategy and propose to gradually add the edge device with strong channel condition into the scheduling set $\tilde{\cal K}(t)$.
Specifically, we sort the edge devices in a descending order according to channel power gain $|h_k(t)|^2$ and initialize the scheduling set with $\check{\cal K}(t) = \varnothing$.
For the simplicity of following analysis, we make the following assumption without loss of generality
\begin{equation}\label{reorder_channel_gain}
    |h_1(t)|^2\geq \cdots\geq|h_K(t)|^2.
\end{equation}
Then, the scheduling set $\check{\cal K}(t)$ appends the edge devices one by one based on the order in \eqref{reorder_channel_gain}.
Given an arbitrary scheduling set $\check{\cal K}(t)$, the bandwidth allocation optimization problem can be expressed as
\begin{equation}\label{Pro_2}
    \begin{split}
        \textbf{P2}\quad\underset{\{B_k(t)\}}{\min}\;\quad& \hat{D}(t)D(t)\notag\\
            \st{}\quad\quad& \eqref{sum_bandwidth}, 0\leq B_k(t)\leq B,\;\forall\;k=1,\ldots,N.
    \end{split}
\end{equation}
Recall that $\hat{D}(t)\geq 0$ and $D(t) =\underset{k\in\tilde{\mathcal{K}}(t)}{\max}\;\frac{\mu}{r_k(t)}$, Problem \textbf{P2} can be rewritten by 
\begin{equation}\label{Pro_3}
    \begin{split}
        \textbf{P3}\quad\quad\text{find}\;\quad& \{B_k(t)\}\notag\\
            \st{}\;\quad& \frac{\mu}{r_k(t)} = \check{D}(t),\;\forall\;k\in\check{\cal K}(t)\\
            &\eqref{sum_bandwidth}, 0\leq B_k(t)\leq B,\;\forall\;k\in\check{\cal K}(t),
    \end{split}
\end{equation}
where $\check{D}(t) = \underset{k\in\check{\mathcal{K}}(t)}{\min\max}\;\frac{\mu}{r_k(t)}$ denotes the minimum transmission delay under given scheduling set. 
To this end, given $\check{D}(t)$, we tackle Problem \textbf{P3} by solving the following equation
\begin{equation}\label{Pro_4}
    \frac{\mu}{B_k(t) \log_2(1+\frac{|h_k(t)|^2}{B_k(t)\sigma^2})} = \check{D}(t).
\end{equation}
\begin{lemma}\label{lmm3}
    For any $k\in\check{\cal K}(t)$, given transmission delay $\check{D}(t)$, the required bandwidth for \eqref{Pro_4} is 
    \begin{equation}\label{req_band}
        \frac{1}{B^\star_k} = -\frac{\check{D}(t){\rm LambW}\left(\nu(t)2^{\nu(t)}\ln{2} \right)}{\mu\ln 2}- \frac{\sigma^2}{|h_k(t)|^2},\forall\;k\in\check{\cal K}(t),
    \end{equation}
    where $\nu(t) = -\frac{\mu|h_k(t)|^2}{\check{D}(t)\sigma^2}$, and ${\rm LambW}(\cdot)$ is a Lambert-W function.
\end{lemma}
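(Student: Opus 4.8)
Since \eqref{Pro_4} simply requires each scheduled device to meet the common delay target $\check{D}(t)$, the plan is to read it as the rate constraint $r_k(t)=\mu/\check{D}(t)$ and then invert the monotone map $B_k\mapsto r_k(B_k)$ in closed form via the Lambert-W function. First I would clear the logarithm base by writing the equation as $B_k(t)\ln\!\big(1+\tfrac{|h_k(t)|^2}{B_k(t)\sigma^2}\big)=\tfrac{\mu\ln 2}{\check{D}(t)}$, abbreviate $C:=\mu\ln 2/\check{D}(t)$ and the normalized channel gain $a:=|h_k(t)|^2/\sigma^2$, so that the relation collapses to $B_k\ln(1+a/B_k)=C$.

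The second step is the substitution that exposes the Lambert-W structure. Putting $p:=1+a/B_k$ (so $B_k=a/(p-1)$) turns the equation into $a\ln p=C(p-1)$, i.e. $\ln p=\gamma p-\gamma$ with $\gamma:=C/a=\mu\ln 2\,\sigma^2/(\check{D}(t)|h_k(t)|^2)$. Exponentiating gives $p\,e^{-\gamma p}=e^{-\gamma}$, and multiplying through by $-\gamma$ yields the canonical form $(-\gamma p)e^{-\gamma p}=-\gamma e^{-\gamma}$. Applying $W(z)e^{W(z)}=z$ then gives $-\gamma p={\rm LambW}(-\gamma e^{-\gamma})$; since $-\gamma e^{-\gamma}=(\nu\ln 2)e^{\nu\ln 2}=\nu2^{\nu}\ln 2$ precisely when $\nu\ln 2=-\gamma$, this matches the argument stated in the lemma (with $\nu=-\gamma/\ln 2=-\mu\sigma^2/(\check{D}(t)|h_k(t)|^2)$, the normalized noise-to-signal ratio). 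Back-substituting then closes the proof: from $p=-{\rm LambW}(\cdot)/\gamma$ and $1/B_k^\star=(p-1)/a$, and using $a\gamma=C=\mu\ln 2/\check{D}(t)$ together with $1/a=\sigma^2/|h_k(t)|^2$, I obtain $\tfrac{1}{B_k^\star}=-\tfrac{\check{D}(t)}{\mu\ln 2}{\rm LambW}(\cdot)-\tfrac{\sigma^2}{|h_k(t)|^2}$, which is exactly \eqref{req_band}.

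The step I expect to be the real obstacle is branch selection, since $z\mapsto ze^z$ is two-to-one on $[-1/e,0)$ and ${\rm LambW}$ therefore has two real branches $W_0,W_{-1}$. I would first establish feasibility: because $h(B):=B\ln(1+a/B)$ is strictly increasing from $0$ to $a$ (using $\ln(1+x)>x/(1+x)$ for $x>0$), a positive root exists iff $C<a$, equivalently $\gamma<1$, and then $-\gamma e^{-\gamma}\in(-1/e,0)$ so that both branches are real and distinct. The spurious root $p=1$ (i.e. $B_k\to\infty$) corresponds to $-\gamma p=-\gamma\in(-1,0)$, which lies on the principal branch $W_0$; the physically admissible root satisfies $p>1/\gamma>1$, hence $-\gamma p<-1$, which lives on the lower branch $W_{-1}$. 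Thus the ${\rm LambW}$ in \eqref{req_band} must be read as $W_{-1}$, and I would justify this selection through the monotonicity of $g(p)=pe^{-\gamma p}$, which increases then decreases about its peak at $p=1/\gamma$, discarding $p=1$ as the non-physical $B_k\to\infty$ solution and retaining the unique finite positive bandwidth.
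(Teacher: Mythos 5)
Your proof is correct and follows essentially the same route as the paper: both reduce \eqref{Pro_4} to the canonical Lambert-W form $ze^{z}$ and invert; the paper substitutes $y_k(t)=1/B_k(t)$ while you substitute $p=1+a/B_k$ with $a=|h_k(t)|^2/\sigma^2$, a cosmetic difference. Two things distinguish your write-up, both in your favor. First, the paper selects the branch by fiat (a closing remark that ${\rm LambW}(\cdot)$ is taken with $k=-1$), whereas you actually justify it: you show feasibility requires $\gamma=\mu\ln 2\,\sigma^2/(\check{D}(t)|h_k(t)|^2)<1$ via the monotonicity of $B\mapsto B\ln(1+a/B)$ with range $(0,a)$, identify the spurious root $p=1$ (i.e., $B_k\to\infty$) as the $W_0$ solution, and place the physical root at $-\gamma p<-1$ on $W_{-1}$ — none of which appears in the paper. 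Second, your algebra exposes an error in the paper's statement: the lemma (and the paper's proof) defines $\nu(t)=-\mu|h_k(t)|^2/(\check{D}(t)\sigma^2)$, but the multiply-and-divide step in the paper's proof, which requires $-\nu(ay+1)=cy-\nu$ with $c=\mu/\check{D}(t)$, only balances when $\nu=-c/a=-\mu\sigma^2/(\check{D}(t)|h_k(t)|^2)$ — exactly the value you derive (with the paper's $\nu$ it would need $a^2=1$). The closed form \eqref{req_band} is correct only under this corrected $\nu$; you present this somewhat quietly, as if it "matches" the lemma, when in fact you have fixed a typo — it would be cleaner to flag the swap of $|h_k(t)|^2$ and $\sigma^2$ explicitly, but the mathematics itself is sound and complete.
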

\begin{proof}
    See Appendix D.
\end{proof}

Based on Lemma \ref{lmm3}, we can obtain the minimum required bandwidth to reach the required transmission delay $\check{D}(t)$, which however may violate constraint \eqref{sum_bandwidth}. 
We can find the optimal $\check{D}(t)$ by a bisection search.
Therefore, Problem \textbf{P2} is finally solved with arbitrary scheduling set $\check{\cal K}(t)$.
Next, we search over $\tilde{\cal K}(t)$ to maximize $\mathcal{J}( \tilde{\mathcal{K}}(t), \{B_k(t)\})$ by adding edge devices one-by-one into $\check{\cal K}(t)$. 
In particular, when $N(t)$ gets larger, each scheduled edge device can share less bandwidth, therefore the corresponding transmission delay increases. 
As a result, a terminate condition can be set as when $\mathcal{J}( \tilde{\mathcal{K}}(t), \{B_k(t)\})$ reduces after including one more edge device.
To this end, we summarize the proposed online algorithm in Algorithm \ref{online_algorithm}.
\begin{algorithm}[!t]
	\caption{Proposed Online Algorithm}
	\LinesNumbered
	\label{online_algorithm}
    \KwIn{$\hat{D}(0)\gets 0$}
	\For{each communication round $t = 1,\ldots,$}{
        $\tilde{\cal{K}}(t) = \{\}$, $\rm obj(t) \gets 0$\\
    Sort the edge devices according to $|h_k(t)|^2$ in descending order, i.e., $|h_1(t)|^2 \geq \cdots\geq |h_K(t)|^2$\\
    \For {$k = |\tilde{\cal{K}}(t)|+1,\ldots, K$}{
        $\check{\cal{K}}(t)\gets \tilde{\cal{K}}(t)\cup \{k\}$\\
        \While{$\sum_{k\in\check{\cal K}(t)}B_k(t)<B$}{
        Bisection search for minimum delay $\check{D}(t)$\\
        Obtain $\{B_k(t)\}$ according to \eqref{req_band}\\
        }
        \uIf{ $\mathcal{J}\left(\check{\cal{K}}(t), \{B_k(t)\}\right)>=\rm obj$}{
            ${\rm obj} \gets \mathcal{J}\left(\check{\cal{K}}(t), \{B_k(t)\}\right)$\\
            $\tilde{\cal{K}}(t) \gets \check{\cal{K}}(t)$\\
            $\{B_k^\star(t)\} \gets \{B_k(t)\}$
        }
        \Else{
            Stop Iteration\\
            \Return{$\{B_k^\star(t)\}$, $\check{D}(t)$, and $\tilde{\cal{K}}(t) $}
        }
    }
    $\hat{D}(t) \gets \max\left\{0, \hat{D}(t-1)+\zeta(t)\left(\check{D}(t)-\bar{D}(t)\right) \right\}$\\
		}
\end{algorithm}
\begin{definition}
    Existing a positive constant $\Delta$ and an optimal solution $\{\tilde{\mathcal{K}}^\star(t), \{B_k^\star(t)\}\}$ for Problem \textbf{P1}, if
    \begin{equation}
    \mathcal{J}( \tilde{\mathcal{K}}(t), \{B_k(t)\})\geq \mathcal{J}( \tilde{\mathcal{K}}^\star(t), \{B_k^\star(t)\}) - \Delta,
    \end{equation}
    then $\{\tilde{\mathcal{K}}(t), \{B_k(t)\}\}$ is $\Delta-$optimal for Problem \textbf{P1}. 
\end{definition}
\begin{thm}\label{thm2} (Threshold-Based $\Delta-$optimal device scheduling)
	The proposed online algorithm achieves the $\Delta-$optimal for solving Problem \textbf{P1}, where the device scheduling has a threshold-based structure given by
	\begin{equation}
			{\rm \rho}_j = \left\{\begin{array}{rc}
				1&j\leq k\\
				0&j>k
			\end{array}.\right.
	\end{equation}
\end{thm}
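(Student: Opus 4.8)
The plan is to prove Theorem~\ref{thm2} in two stages: first establish the threshold structure of any optimal scheduling set for Problem \textbf{P1}, and then bound the suboptimality of the greedy increment-and-stop rule in Algorithm~\ref{online_algorithm}. For the threshold structure, I would first argue that the inner bandwidth-allocation Problem \textbf{P2} is solved optimally by the equal-delay allocation underlying Problem \textbf{P3} and Lemma~\ref{lmm3}: for any fixed scheduling set, the min-delay allocation equalizes $\mu/r_k(t)=\check{D}(t)$ across all scheduled devices, since any unequal allocation could be strictly improved by shifting bandwidth toward the bottleneck device. I would then use an interchange argument. Suppose an optimal set $\tilde{\cal K}^\star(t)$ contains a device $j$ but omits a device $i$ with $|h_i(t)|^2>|h_j(t)|^2$. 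By the monotone dependence of the required bandwidth on $|h_k(t)|^2$ implied by \eqref{req_band} — a stronger channel requires less bandwidth $B_k^\star$ to meet a given delay $\check{D}(t)$ — swapping $i$ in for $j$ keeps the cardinality $N(t)$ unchanged while not increasing the total bandwidth required, hence not increasing $\check{D}(t)$ and not decreasing $\mathcal{J}$. Iterating this exchange shows there is always an optimal set consisting of the top-ranked devices under the ordering \eqref{reorder_channel_gain}, which is exactly the threshold form $\rho_j=1$ for $j\le k$ and $\rho_j=0$ otherwise.

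Given the threshold structure, Problem \textbf{P1} collapses to the one-dimensional search $\max_{N}\mathcal{J}(N)=N-\zeta(t)\hat{D}(t)\check{D}(N)$, where $\check{D}(N)$ is the minimum delay achievable with the top-$N$ devices. I would show that $\check{D}(N)$ is non-decreasing in $N$: adding a weaker device under the shared-bandwidth constraint \eqref{sum_bandwidth} can only tighten the min-max delay. The greedy rule in Algorithm~\ref{online_algorithm} evaluates $\mathcal{J}(1),\mathcal{J}(2),\dots$ and halts at the first $N$ where including another device reduces the objective. The marginal change is $\mathcal{J}(N+1)-\mathcal{J}(N)=1-\zeta(t)\hat{D}(t)\big(\check{D}(N+1)-\check{D}(N)\big)$, so the stopping point is the first $N$ at which the incremental delay penalty exceeds the unit gain from scheduling one more device.

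The $\Delta$-optimality then follows by bounding the gap between this greedy stopping point and the true maximizer over $N\in\{1,\dots,K\}$. If the incremental delay $\check{D}(N+1)-\check{D}(N)$ were non-decreasing in $N$ (i.e.\ $\check{D}(N)$ convex), the objective $\mathcal{J}(N)$ would be unimodal and the greedy stop would be exactly optimal, giving $\Delta=0$. The main obstacle is that this convexity is not guaranteed by \eqref{req_band} alone, so $\mathcal{J}(N)$ may exhibit a bounded non-monotonicity before it ultimately decreases. I would therefore define $\Delta$ as the maximum downward deviation $\max_{N}\big[\mathcal{J}(N^\star)-\mathcal{J}(N)\big]$ over the greedy-reachable prefix and bound it using the boundedness of $\check{D}(N)$ over the finite range $N\in\{1,\dots,K\}$, which yields a finite constant $\Delta$ and certifies that Algorithm~\ref{online_algorithm}'s output satisfies $\mathcal{J}(\tilde{\cal K}(t),\{B_k(t)\})\ge \mathcal{J}(\tilde{\cal K}^\star(t),\{B_k^\star(t)\})-\Delta$. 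Controlling this non-unimodality — showing the dip before the final descent is small — is where I expect the bulk of the technical effort to lie, since it hinges on a finer analysis of how the shared-bandwidth min-max delay $\check{D}(N)$ grows as successively weaker channels enter the set.
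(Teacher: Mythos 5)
Your proposal is correct in its main steps but takes a genuinely different, more constructive route than the paper. The paper proves Theorem~\ref{thm2} by contradiction: it posits a $\frac{\Delta}{2}$-optimal solution $\{\tilde{\mathcal{K}}^{\lozenge}(t),\{B_k^{\lozenge}(t)\}\}$ violating the threshold structure, swaps a whole block of scheduling indices to restore it, and then bounds the objective change crudely by the maximum cardinality difference $\bigl||\tilde{\mathcal{K}}^{\lozenge}(t)|-|\tilde{\mathcal{K}}(t)|\bigr|\le K$ after discarding the delay-penalty difference via a sign argument, arriving at $\frac{\Delta}{2}+K=\Delta$, i.e.\ a uniform constant of order $K$. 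You instead build the result bottom-up: equal-delay optimality of the bandwidth allocation (which is exactly the paper's passage from \textbf{P2} to \textbf{P3} and Lemma~\ref{lmm3}), a pairwise interchange exploiting the monotonicity of the required bandwidth in \eqref{req_band} with respect to $|h_k(t)|^2$ to get the threshold structure (cleaner and more explicit than the paper's block swap, whose sign step dropping $-\zeta(t)\hat{D}(t)|D^{\lozenge}(t)-D(t)|$ is itself loose), and then a reduction of \textbf{P1} to the one-dimensional search $\max_N\,\bigl[N-\zeta(t)\hat{D}(t)\check{D}(N)\bigr]$ with $\check{D}(N)$ non-decreasing. The one place you overestimate the difficulty is the final step: under the paper's Definition~1, $\Delta$-optimality only requires the \emph{existence} of a positive constant $\Delta$, not that the dip caused by non-unimodality be small, and your own monotonicity lemma already closes the argument uniformly --- since any maximizer satisfies $\mathcal{J}(N^\star)\le K-\zeta(t)\hat{D}(t)\check{D}(1)$ while the greedy output value is at least $\max\{0,\,1-\zeta(t)\hat{D}(t)\check{D}(1)\}$, the penalty terms cancel and the gap is at most $K-1$, which is actually tighter than the paper's $\frac{\Delta}{2}+K$ bound. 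So the "bulk of the technical effort" you anticipate (a finer analysis of how $\check{D}(N)$ grows, or convexity of $\check{D}(N)$) is unnecessary at the coarse level of guarantee the theorem asserts; just be careful not to leave $\Delta$ defined as the realization-dependent maximum downward deviation, which would make the claim vacuous --- make it uniform over channel realizations and $\hat{D}(t)$ as above.
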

\begin{proof}
	See Appendix E.
\end{proof}
\section{Simulation Results}\label{SecSim}
In this section we validate the theoretical analysis and evaluate the performance of the proposed algorithm. 
\subsection{Setting}
We consider a wireless FedFT system. 
The edge server is located at $(0, 0)$, and the edge devices are uniformly located in a circle with center $(300, 0)$ and radius $50$ meters. In particular, the large scale fading is modeled as $(d/d_0)^{-3.5}$, where $d_0 = 10$ denotes the reference distance. The noise spectral density is set as $10^{-11}$ W and the total bandwidth $B = 10$ MHz. The average transmission delay $\bar{T} = 0.05$ ms per bit.
\begin{figure*}[t]
	\centering
	\subfigure[Training loss (SST-2)]
	{
		\begin{minipage}[t]{0.31\textwidth}
			\centering
			\includegraphics[width=1\linewidth]{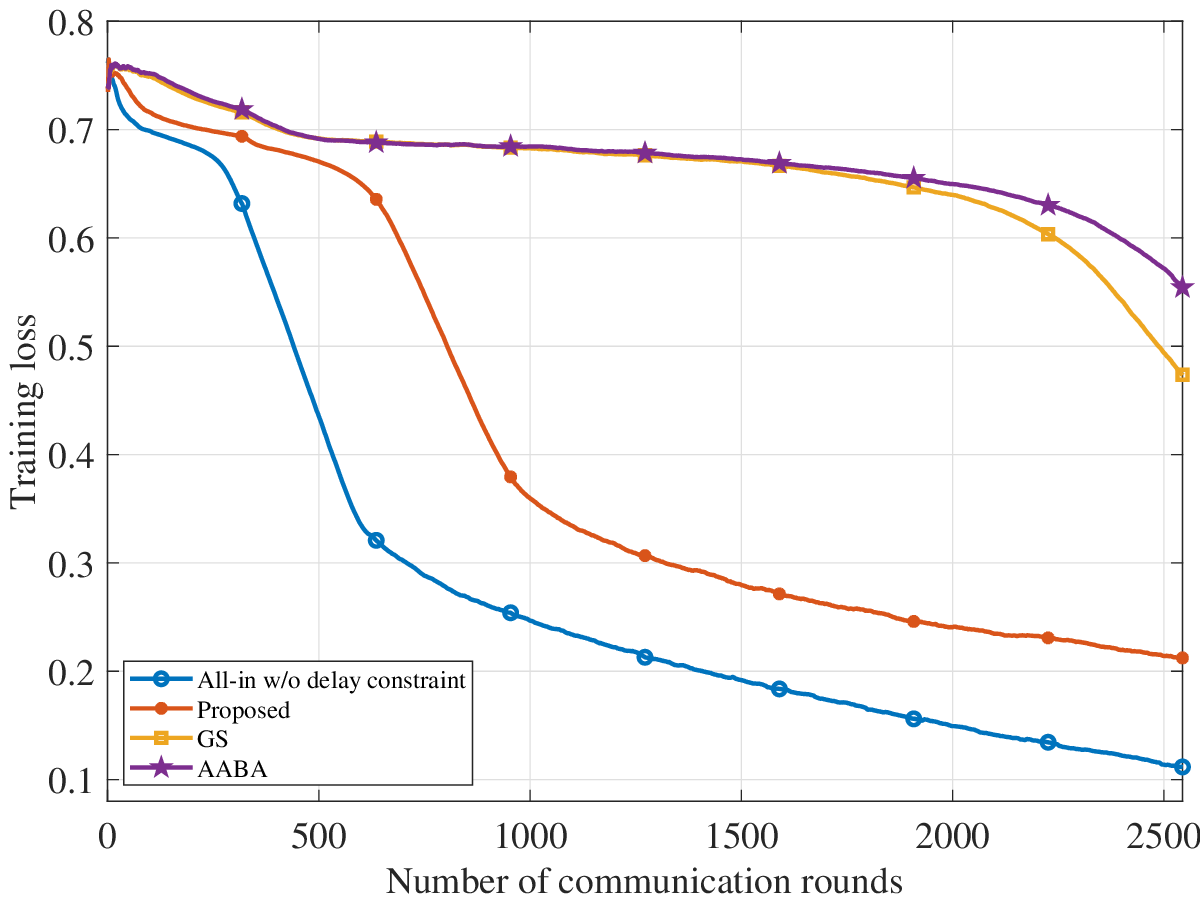}
			\label{fig1_1}
		\end{minipage}
	}
	\subfigure[Training loss (MRPC)]
	{
		\begin{minipage}[t]{0.31\textwidth}
			\centering
			\includegraphics[width=1\linewidth]{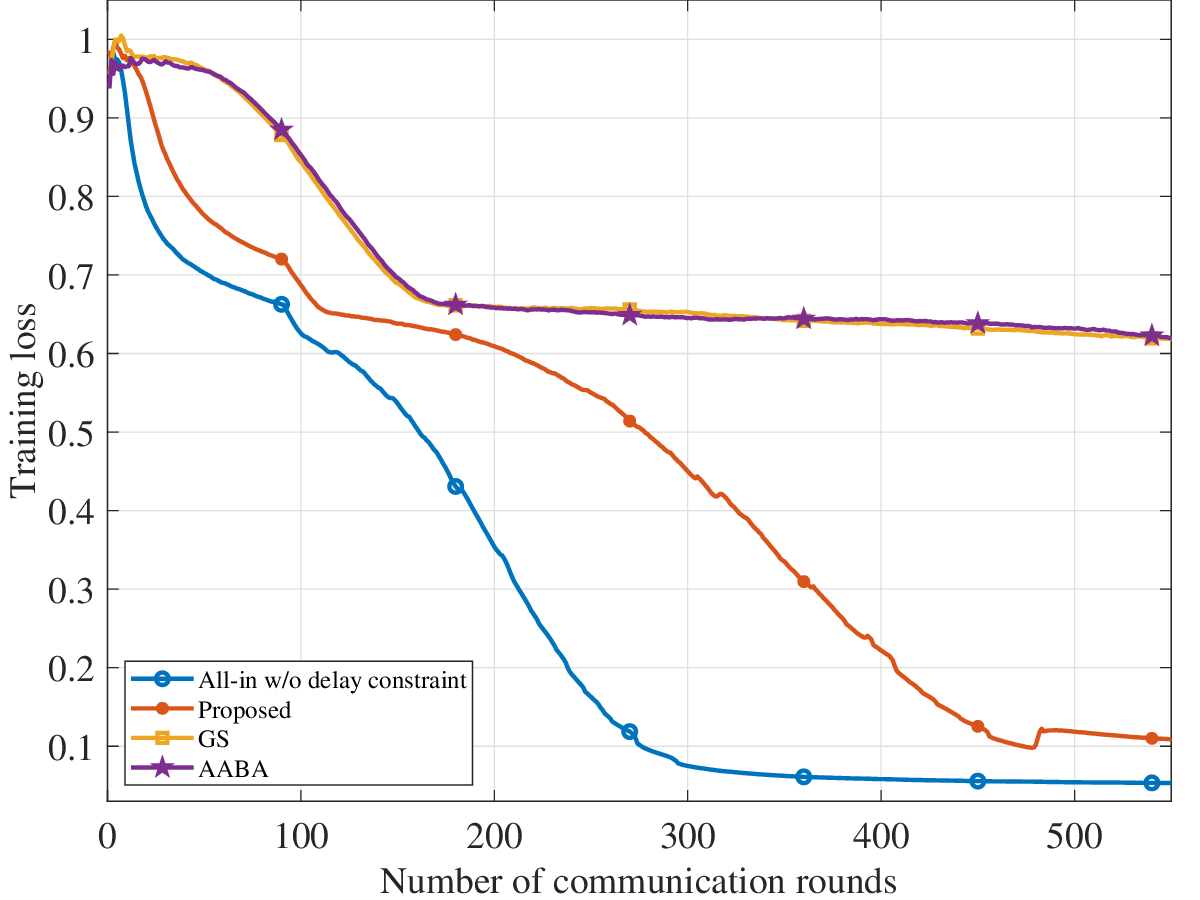}
			\label{fig2_1}
		\end{minipage}
	}
	\subfigure[Training loss (QNLI)]
	{
		\begin{minipage}[t]{0.31\textwidth}
			\centering
			\includegraphics[width=1\linewidth]{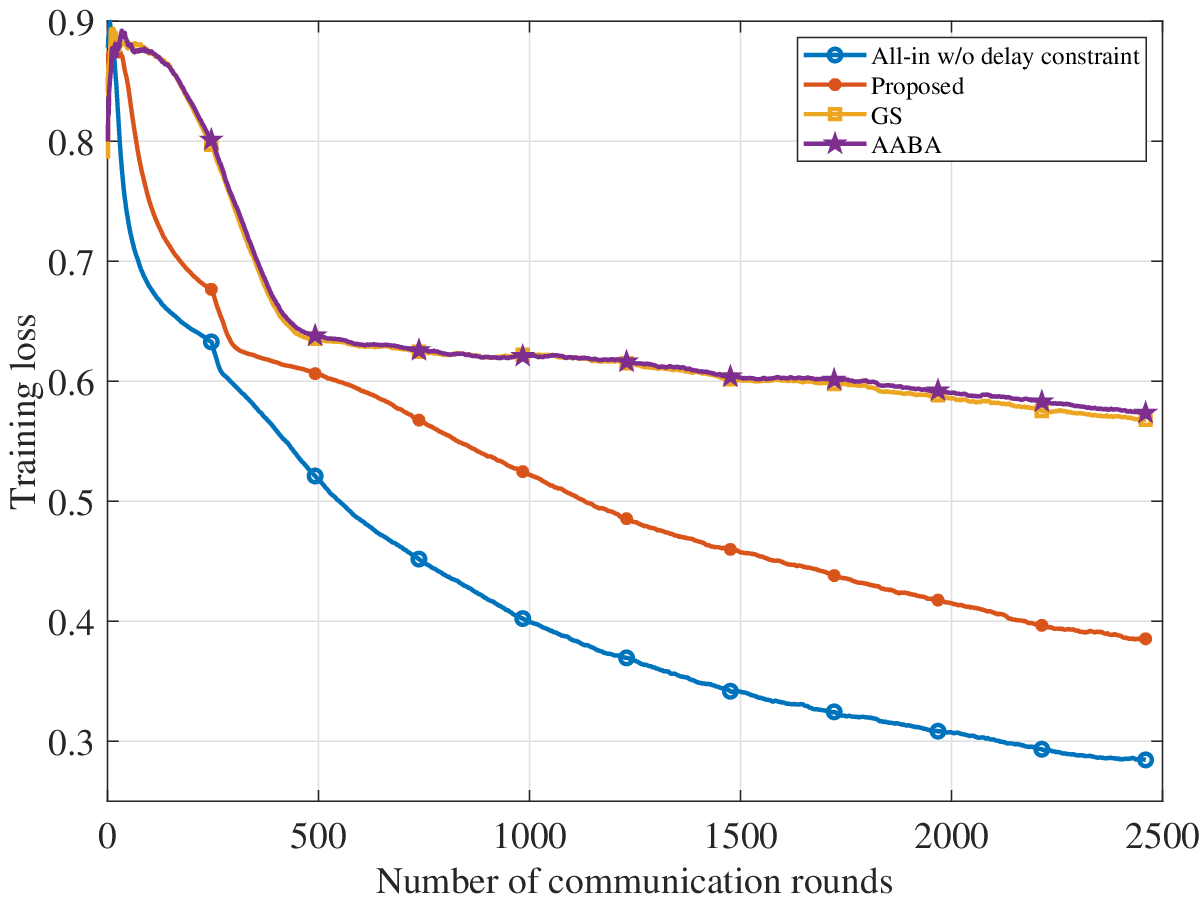}
			\label{fig3_1}
		\end{minipage}
	}
	\\
	\subfigure[Test accuracy (SST-2)]
	{
		\begin{minipage}[t]{0.31\textwidth}
			\centering 
			\includegraphics[width=1\linewidth]{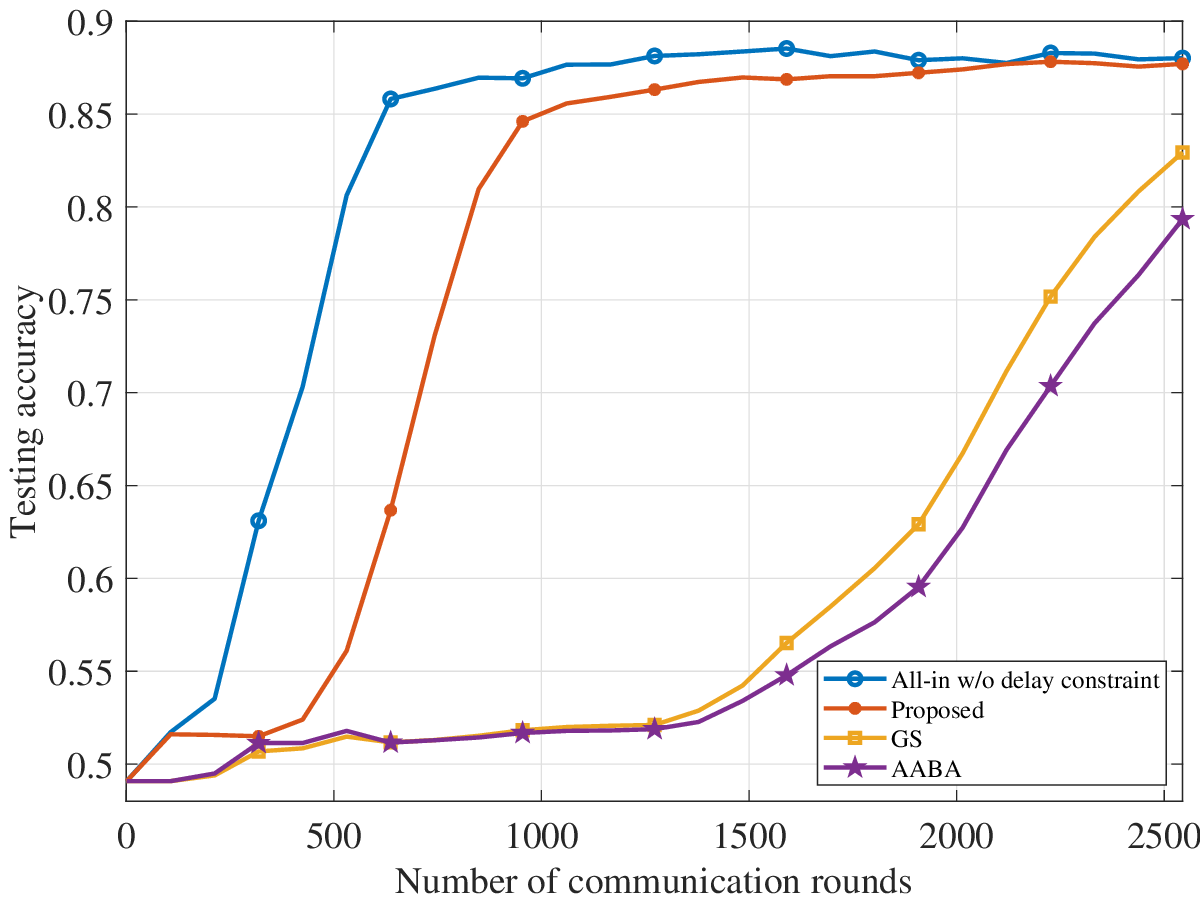}
			\label{fig1_2}
		\end{minipage}
	}
	\subfigure[Test accuracy (MRPC)]
	{
		\begin{minipage}[t]{0.31\textwidth}
			\centering 
			\includegraphics[width=1\linewidth]{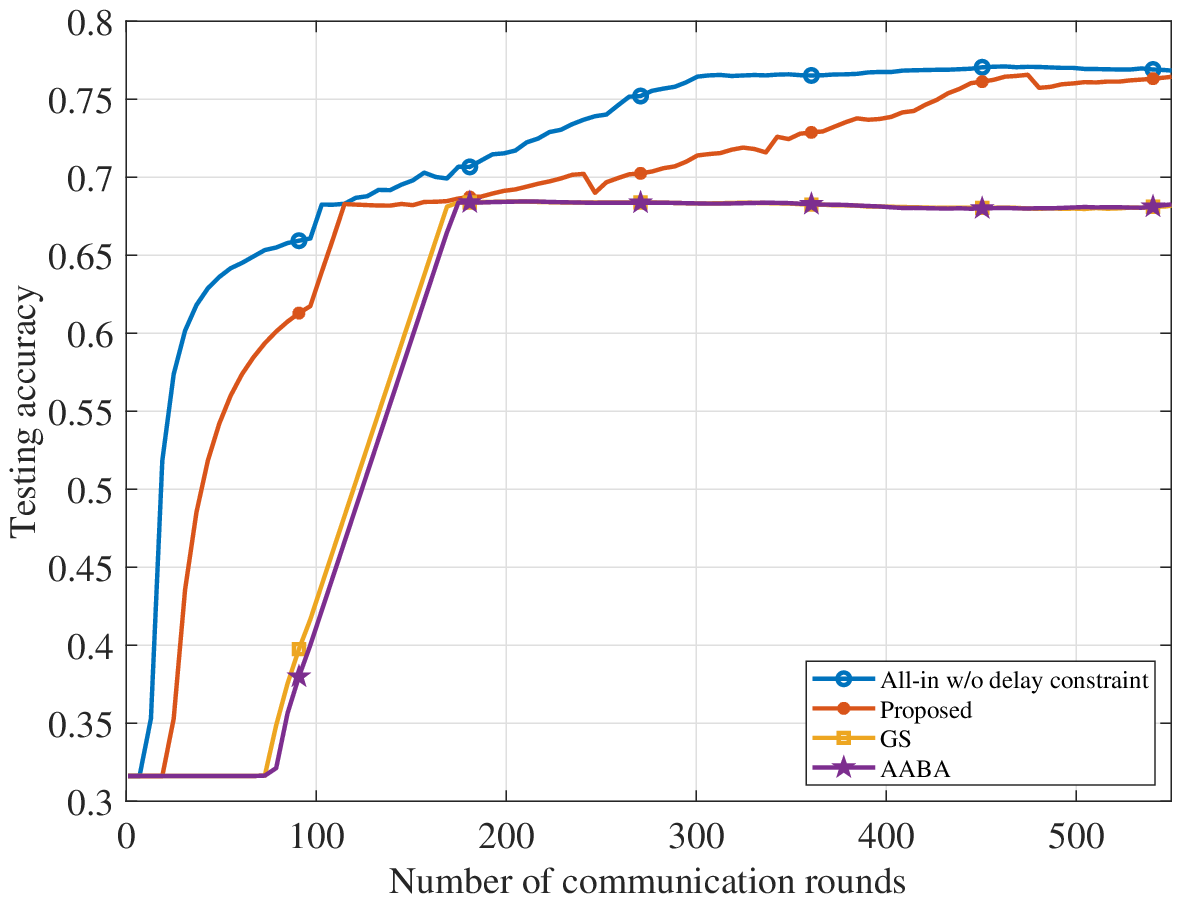}
			\label{fig2_2}
		\end{minipage}
	}
	\subfigure[Test accuracy (QNLI)]
	{
		\begin{minipage}[t]{0.31\textwidth}
			\centering 
			\includegraphics[width=1\linewidth]{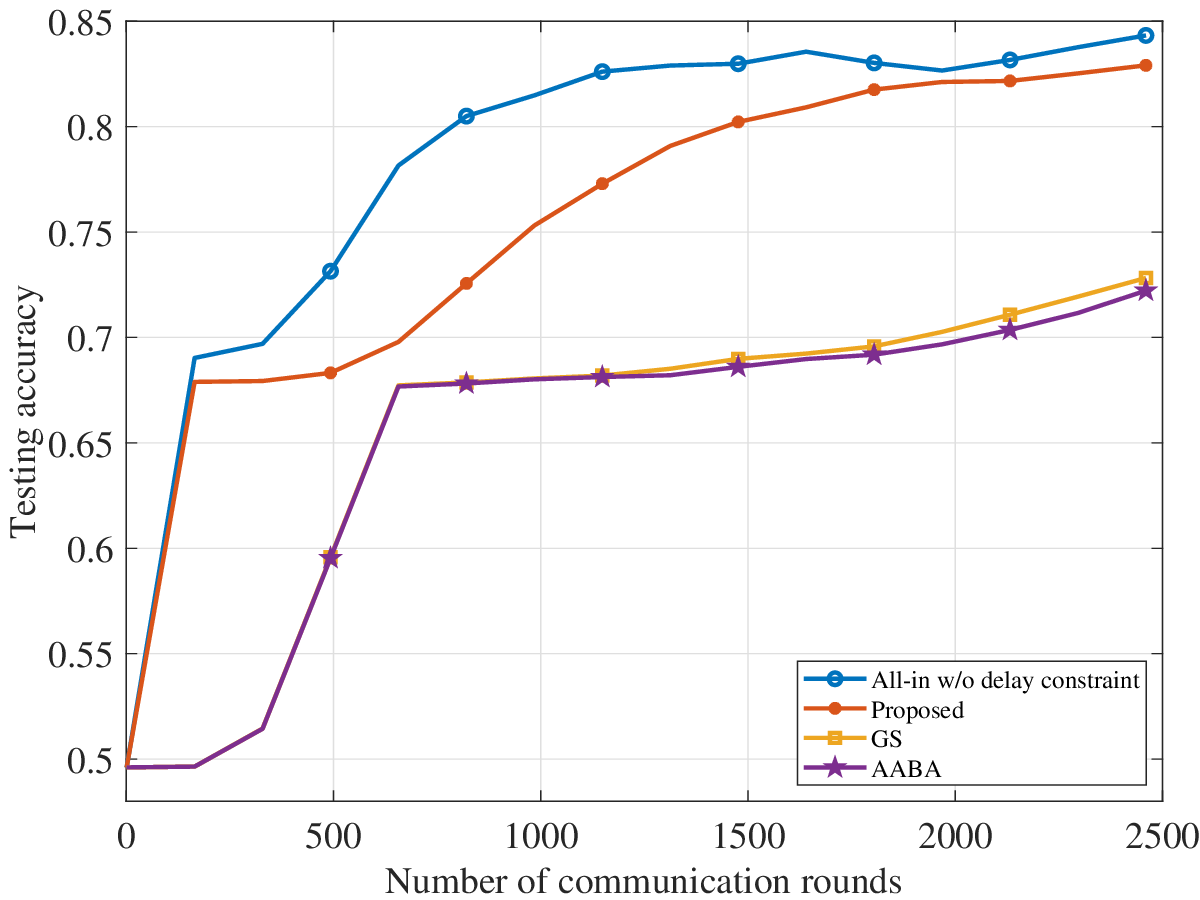}
			\label{fig3_2}
		\end{minipage}
	}
	\caption{Training loss and test accuracy versus the number of communication rounds.}
	\label{fig1}
\end{figure*}

\textbf{FedFT Setting}: We adopt BERT as the pre-trained FM, which has more than $100$ million parameters. We then apply LoRA to the value, key, queue matrices and the dense layers in the pooler of the encoder. Specifically, we set the rank of the low-rank matrices in LoRA as $8$, and the associated weight parameter as $16$. 
As a result, the number of trainable parameters is $454$K and is $0.42\%$ of the total number of parameters. 
We adopt SST-2, MRPC, and QNLI of the GLUE\cite{wang2018glue} benchmark for FedFT. 
Specifically, SST-2 is used for the language sentiment analysis and consists of $67350$ and $1821$ samples in the training set and testing set, respectively.
MRPC is used for semantic texture similarity and consists of $4076$ and $1725$ samples in the training set and testing set, respectively.
QNLI is used for natural language inference and consists of $104743$ and $5463$ samples in the training set and testing set, respectively.
We assign each edge device with $5\%$ data evenly sampled from the global dataset, and deploy $K = 20$ edge devices in total to perform FedFT. Besides, each edge device participates in the training using a mini-batch data with $32$ samples.
We set the learning rate $\eta$ as $1\times 10^{-4}$ and total communication rounds as $5000$.

\textbf{Benchmarks}: we compare the proposed online algorithm with the following benchmarks.
\begin{enumerate}
    \item \textbf{All-in w/o delay constraint}: All edge devices participate in the FedFT without considering the average latency constraint. This represents the best learning performance of the considered FedFT system.
    \item \textbf{Adaptive average bandwidth allocation (AABA)}: Differing from SABA, the total bandwidth is equally allocated to all scheduled edge devices, i.e., $B_k(t) = \frac{B}{|\tilde{\mathcal{K}}(t)|}, \forall k\in\tilde{\mathcal{K}}(t)$. Consequently, the scheduling set $\tilde{\mathcal{K}}(t)$ is determined by
    \begin{equation}
        k = \arg\max_{\in\mathcal{K}}\left\{ \frac{N\mu}{B\log_2(1+\frac{Nk|h_k(t)|^2}{B\sigma^2})}\leq \bar{D}\right\}.
    \end{equation}
    \item \textbf{Greedy strategy (GS)}: By replacing the average constraint with the maximum transmission delay constraint (i.e., $D(t) \leq \bar{D}$), the required bandwidth for each edge device can be obtained by bisection searching the optimal solution for the following equation
    \begin{equation}
        \frac{\mu}{B_k(t)\log_2(1+\frac{|h_k(t)|^2}{B_k(t)\sigma^2})} =\bar{D}.
    \end{equation}
    In particular, through adding the edge device into the scheduling set one by one according to \eqref{reorder_channel_gain}, the terminate condition is met when the residual bandwidth cannot support one more edge device to participate in gradient aggregation.
\end{enumerate}

\subsection{Convergence Performance for Large Language Model}

Fig. \ref{fig1} demonstrates the learning performance comparison between the proposed algorithm and the benchmarks from the perspective of training loss and testing accuracy. 
Beyond the best performance for the \textbf{All-in w/o delay constraint} scheme, the proposed online algorithm outperforms the rest of benchmarks with significantly gap.
It is worth noting that Fig. \ref{fig1} validates the analytical result in Theorem \ref{thm1} as the algorithms that can schedule more devices correspond to better performance.
Specifically, as shown in Fig. \ref{fig1_1}, the training loss corresponding to the proposed algorithm declines rapidly around the $1500$-th communication round and then gradually converges in the rest of communication rounds. Meanwhile, the changes of the training loss corresponding to the \textbf{GS} and \textbf{AABA} is not significant.
This is because, the proposed online algorithm can adaptively schedule more edge devices with exceeding the transmission delay budget under good channel condition and compensate for it in the future communication rounds.
Consequently, compared with the \textbf{GS} and \textbf{AABA}, more edge devices on the average can participate in the gradient aggregation under the scheduling policy of the proposed algorithm, while the \textbf{GS} and \textbf{AABA} are solving problem \textbf{P1} without a long-term view but regarding it as one-shot problems. 
On the other hand, the convergence performance corresponding to \textbf{GS} is better than that of \textbf{AABA}.
This is because, the bandwidth allocation based on the \textbf{GS} algorithm is optimized according to the channel gain, where that of the \textbf{AABA} is averaged among the scheduled edge devices and results in potential radio resource misallocation.

Fig. \ref{fig1_2} presents the testing accuracy comparison versus different communication rounds. 
In particular, the rapid improvement of the test accuracy for \textbf{All-in w/o delay constraint} demonstrates the importance of large number of edge devices involved in FedFT.
The proposed algorithm achieves closed performance to that of the \textbf{All-in w/o delay constraint} scheme under limited radio resource.
As observed, the testing accuracy corresponding to the proposed algorithm increase slowly in the early stage, and then increasing faster in the later rounds, which validates the analysis in Section \ref{SecPro}. 
By effectively allocate bandwidth from a long-term perspective, the proposed online algorithm excels the rest of benchmarks that takes the limited radio resource into account. Besides, the achieved testing accuracy correspond to \textbf{GS} is higher than that of \textbf{AABA} as expected.
\subsection{Convergence Performance for Large Visual Model}
In the following, we deploy the proposed federated LoRA framework over the popular LVM, vision transformer (ViT)\cite{dosovitskiy2020vit}, which consists of $86.5$ million total neural parameters distributed in $12$ stacked transformers. 
Specifically, we implement the proposed framework over the encoder of ViT with setting the rank $r = 8$, where the number of trainable parameters is $663$ thousand accounting for $0.76\% $ of the total parameters.
We evaluate the learning performance on the CIFAR-10 dataset, which is constituted with $60000$ color images in $10$ types.
\begin{figure}[t]
	\centering
	\subfigure[Training loss (CIFAR-10)]
	{
		\begin{minipage}[t]{0.43\textwidth}
			\centering
			\includegraphics[width=0.9\linewidth]{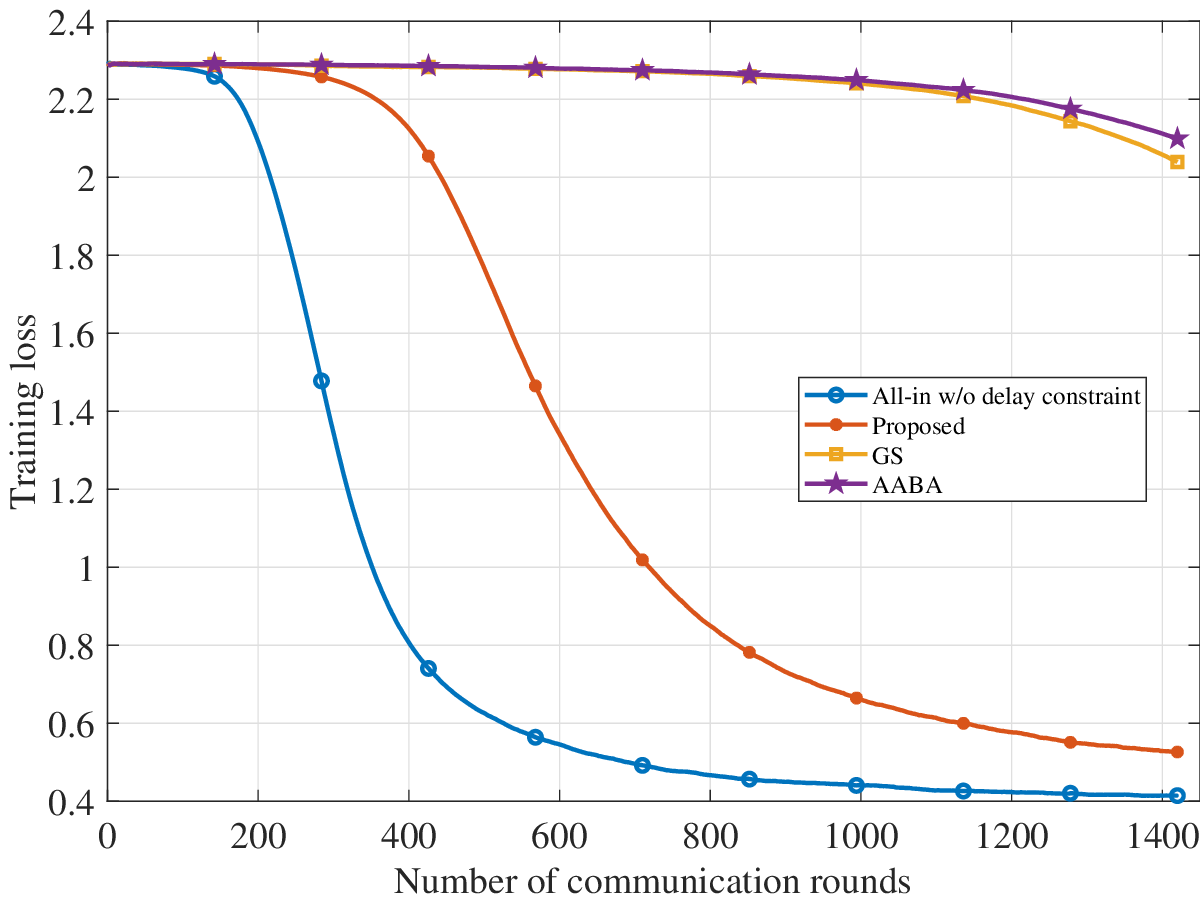}
			\label{fig4_1}
		\end{minipage}
	}
	\subfigure[Test accuracy (CIFAR-10)]
	{
		\begin{minipage}[t]{0.43\textwidth}
			\centering 
			\includegraphics[width=0.9\linewidth]{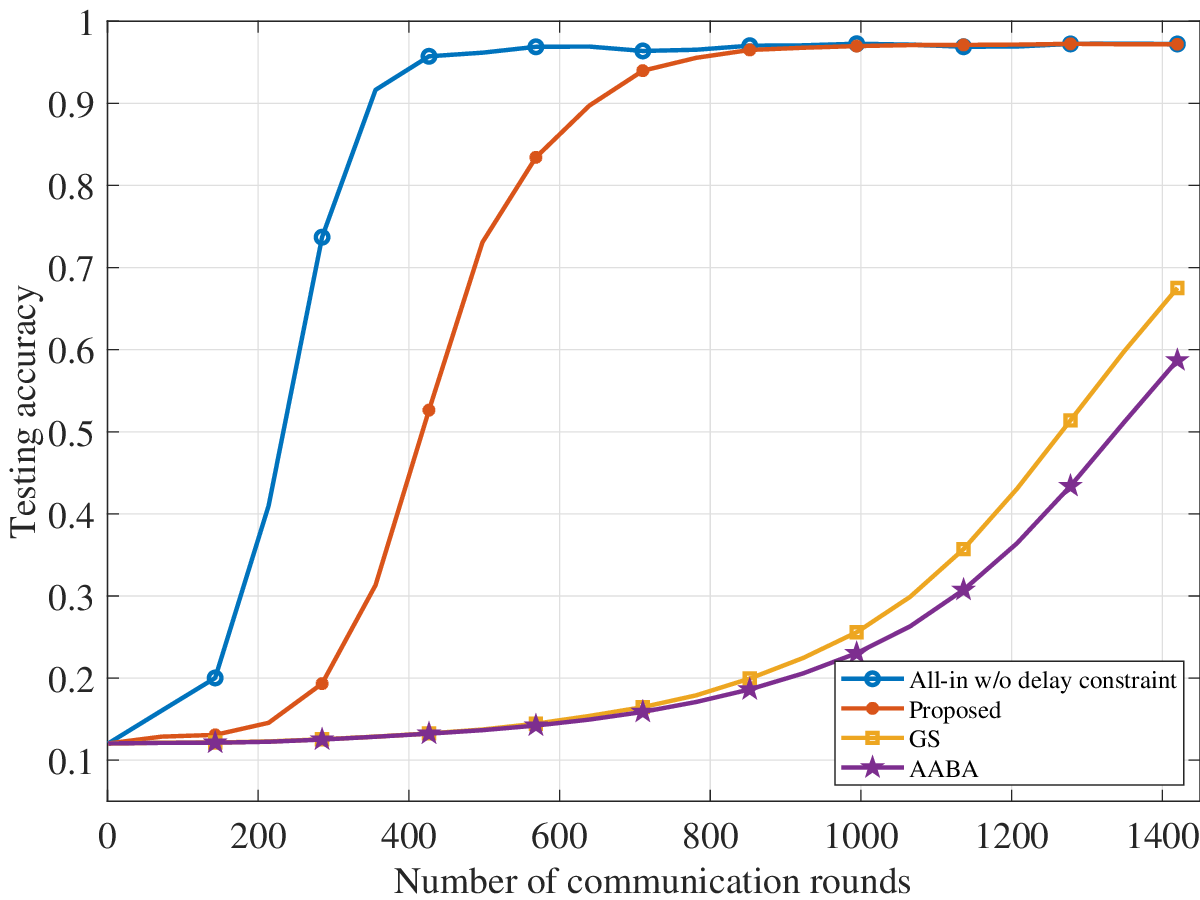}
			\label{fig4_2}
		\end{minipage}
	}
	\caption{Training loss and test accuracy versus number of communication rounds.}
	\label{fig4}
\end{figure}

Fig. \ref{fig4} shows the learning performance comparison of the LLM under different datasets versus number of communication rounds, where the effectiveness of the proposed algorithm is validated via the close performance to that of the All-in scheme in terms of the training loss in Fig. \ref{fig4_1} and test accuracy in Fig. \ref{fig4_2}.
Specifically, the proposed algorithm achieves significantly lower training loss than that of the \textbf{GS} and \textbf{AABA} schemes, demonstrating the superiority of the proposed algorithm aspects of the convergence behavior. 
The test accuracy comparison in Fig. \ref{fig4_2} also validates this observation, where the proposed algorithm achieves higher test accuracy than that of benchmarks considering device scheduling.
Moreover, it can be observed that the curves in Fig. \ref{fig4} is much smoother than that in Fig. \ref{fig1}. This is because, the size of CIFAR-10 dataset is much larger than that of MRPC, SST-2, and QNLI datasets.
Consequently, more samples are assigned to each edge device for local update, which reduces the variance of local gradients and averaged gradient of low-rank matrices.
\begin{figure}[t]
	\centering
	\subfigure[Training loss (CIFAR-10)]
	{
		\begin{minipage}[t]{0.43\textwidth}
			\centering
			\includegraphics[width=0.9\linewidth]{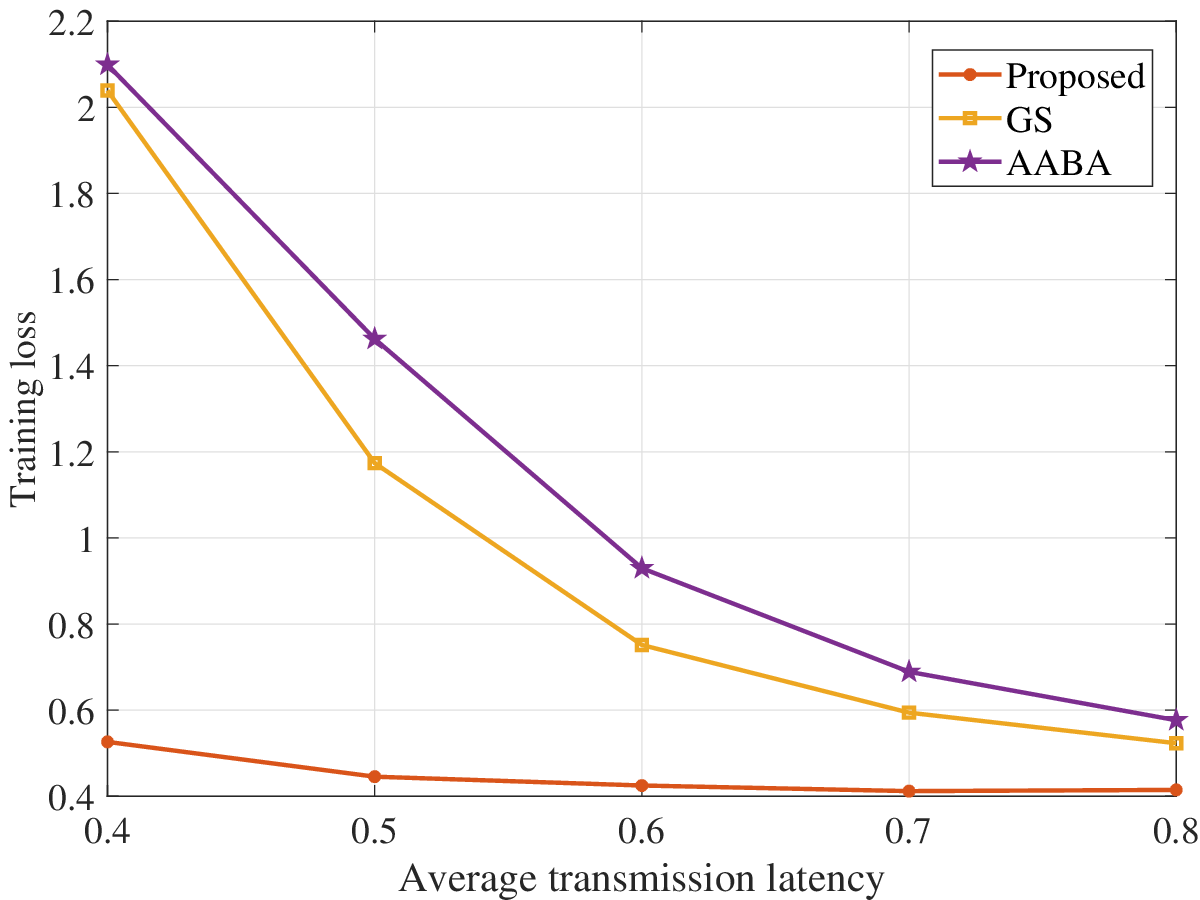}
			\label{fig5_1}
		\end{minipage}
	}
	\subfigure[Test accuracy (CIFAR-10)]
	{
		\begin{minipage}[t]{0.43\textwidth}
			\centering 
			\includegraphics[width=0.9\linewidth]{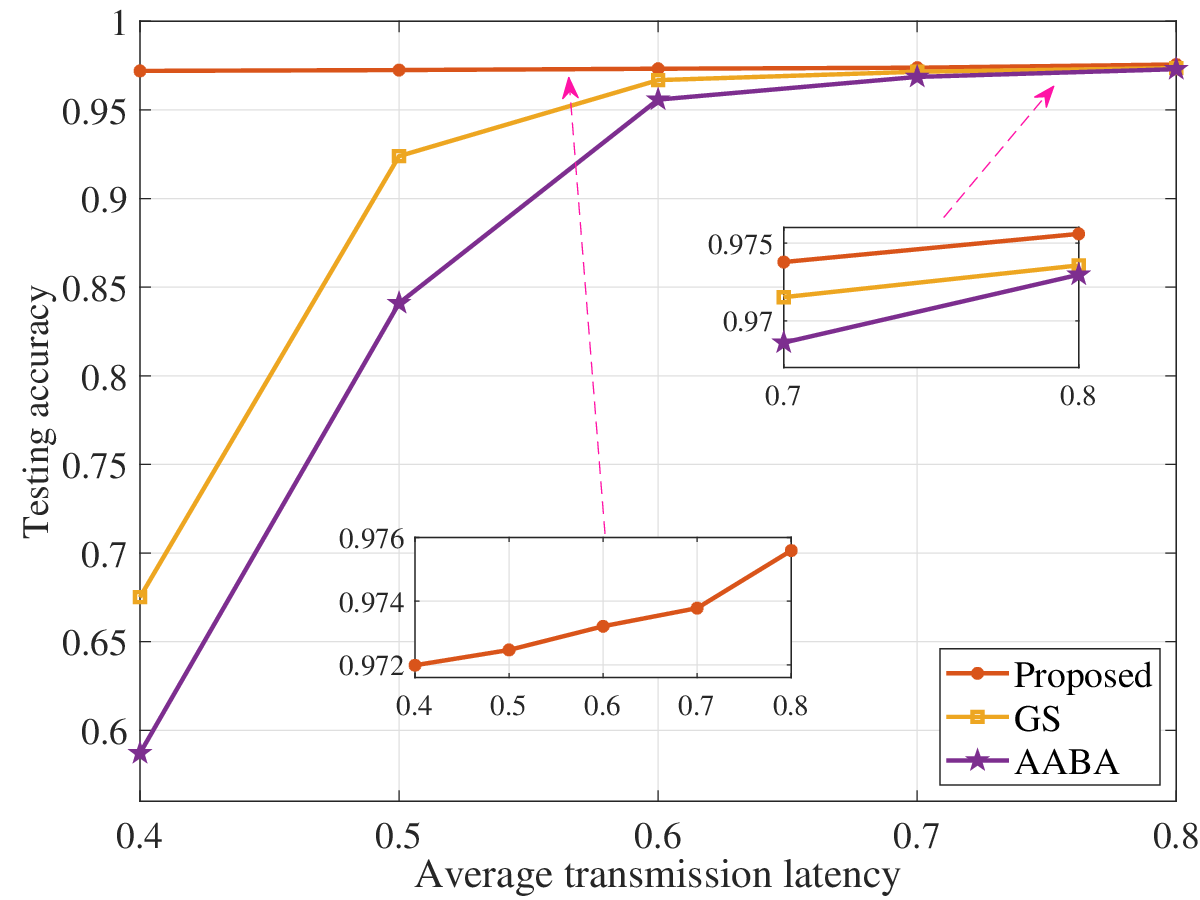}
			\label{fig5_2}
		\end{minipage}
	}
	\caption{Training loss and test accuracy versus the average transmission latency constraint.}
	\label{fig5}
\end{figure}
\subsection{Case Study: Latency Constraint}
With the ViT, we further make the learning performance comparison under different average latency constraint, as shown in Fig. \ref{fig5}.
As observed, the proposed algorithm achieves the best learning performance among the benchmarks considering device scheduling under different latency constraint.
Specifically, as $\bar{D}$ increases, more edge device can be scheduled for gradient aggregation, thereby enhancing the convergence performance and the test accuracy thereafter.
This trend can be observed in Fig \ref{fig5_1} and \ref{fig5_2}, respectively.
In particular, due to the marginal utility of increasing number of edge devices, the reduction of training loss of the proposed algorithm is not as significant as that of the \textbf{GS} and \textbf{AABA} scheme, which also demonstrate that the proposed algorithm can achieve desired convergence behavior under stringent latency constraint.
On the other hand, Fig. \ref{fig5_2} shows the superiority of the proposed algorithm compared with the \textbf{GS} and \textbf{AABA} scheme under low average latency constraint, which shows the high flexibility of the proposed algorithm under ever-changing wireless environment. On the contrary, the compared benchmarks schedule the edge devices without long-term perspectives, thereby performing poor under stringent latency constraint.
\begin{figure}[t]
	\centering
	\includegraphics[width=0.8\linewidth]{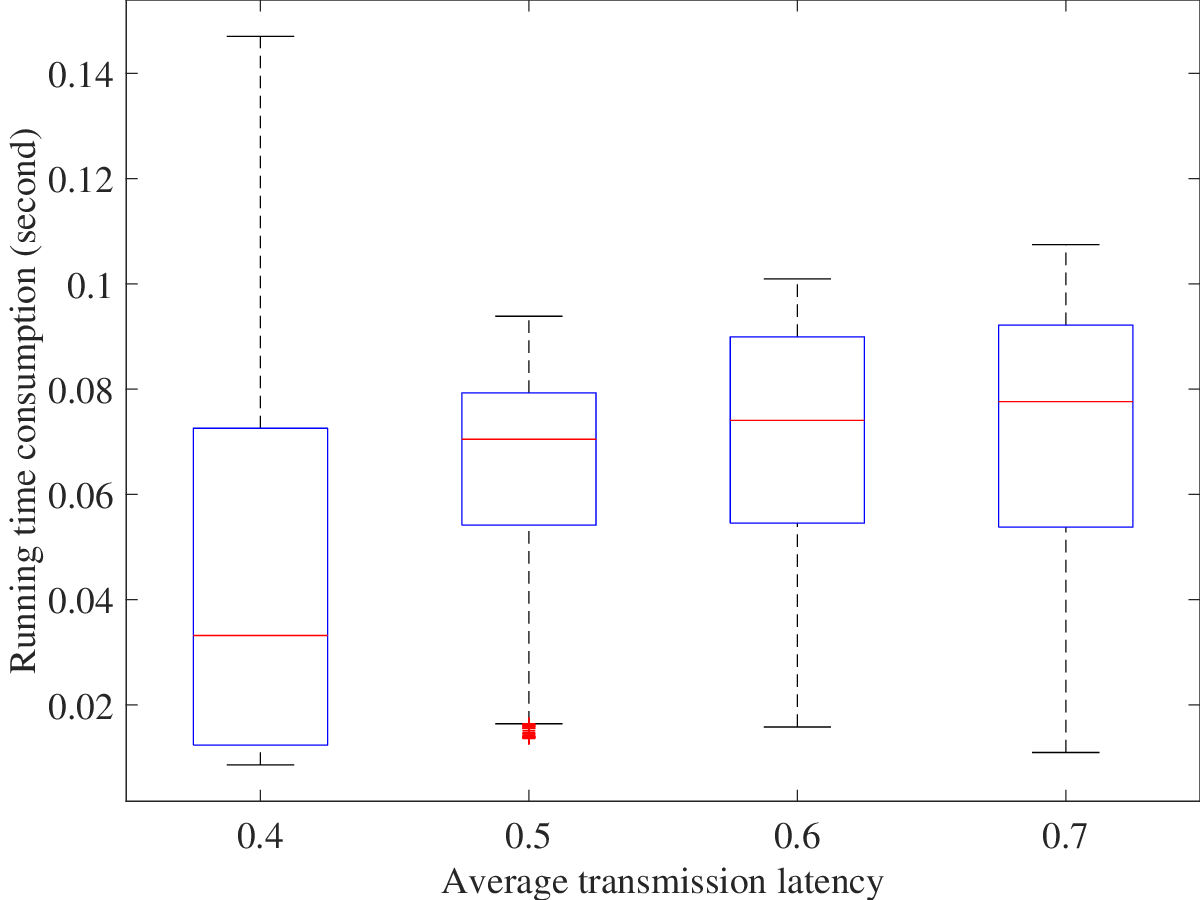}
	\caption{Running time consumption of the proposed algorithm under different average latency constraints.}
	\label{fig6}
\end{figure}
\subsection{Computation Efficiency}
We investigate the calculation efficiency of the proposed algorithm under different transmission latency constraints, as shown in Fig. \ref{fig6}.
As observed, the average running time that is represented as the red lines at the middle of each box is increasing with $\bar{D}$.
This is because, when the average transmission latency constraints looses, more edge devices can be scheduled to participate in the gradient aggregation.
Recall that the proposed algorithm adopts the set expansion strategy, the increasing number of scheduled edge devices in turn increases the average running time consumption.
Besides, when $\bar{D}>0.5$, the maximum time consumption increases along with $\bar{D}$. This is because, scheduling more edge devices requires performing more bisection for optimal $\check{D}(t)$.
It is worth noting that when $\bar{D}=0.4$, the maximum time consumption is greater than that of the other cases. This is because, the stringent latency constraint limits the number of edge devices that participates in the gradient aggregation, which magnifies the impact due to randomness of fading channels and the fluctuation of the virtual queue.
As a result, the proposed algorithm  schedules edge devices with either a small or large number.
This issue is alleviated as $\bar{D}$ increases.
\section{Conclusion}\label{SecCon}
In this paper, we presented a LoRA-based FedFT framework over wireless networks, which reserves the embedding and task modules at the edge devices, while deploying the encoder at the edge server, interconnected via the wireless network. 
We rigorously analyzed the upper bound of the convergence gap for the proposed FedFT and formulated a long-term optimization problem with respect to device scheduling and bandwidth allocation. 
Using Lyapunov analysis, we decomposed the long-term optimization problem into a series of sub-problems and developed an online algorithm employing a set expansion strategy to solve each sub-problem, where the $\Delta$-optimality of the proposed algorithm was proved. 
Simulation results demonstrated the superior learning performance of our proposed algorithm compared to benchmark approaches in the considered FedFT framework.
\section*{Appendix}\label{SecApp}
\subsection{Proof of Lemma \ref{lmm1}}
According to \eqref{EFT_obj}, we can expand the formula as
\begin{align*}
	\Vert\nabla F(\bm{W})\Vert^2=& \left\Vert\frac{1}{K}\sum_{k = 1}^{K} \nabla f_{k}(\tilde{\bm{w}}_{k})\right\Vert^2\\
	=&\left\Vert\frac{1}{K}\sum_{k=1}^{K}\left(\frac{\partial f_k(\bm{w}^{\mathrm{a}})}{\partial \bm{w}^{\mathrm{t}}_{k}}, \frac{\partial f_k(\bm{w}^{\mathrm{t}}_{k})}{\partial \bm{w}^{\mathrm{a}}}\right)\right\Vert^2.
\end{align*}
Then we derive the derivatives for $\bm{w}^{\mathrm{a}}$ and $\{\bm{w}^{\mathrm{t}}_{k}\}$ separately to get
\begin{align*}
	&\Vert\nabla F(\bm{W})\Vert^2\\
	=& \frac{1}{K^2}\left\Vert\sum_{k=1}^{K}\frac{\partial f_{k} (\bm{w}^{\mathrm{t}}_{k})}{\bm{w}^{\mathrm{a}}}\right\Vert^2 + \frac{1}{K^2}\sum_{k=1}^{K}\left\Vert\frac{\partial f_{k} (\bm{w}^{\mathrm{a}})}{\bm{w}^{\mathrm{t}}_{k}}\right\Vert^2\\
	=& \frac{1}{K^2}\left(\left\Vert\sum_{k\in\mathcal{K}}\nabla_{\bm{w}^{\mathrm{a}}} f_{k}(\bm{w}^{\mathrm{a}}; \bm{w}^{\mathrm{t}}_{k})\right\Vert^2\!\!\!\!+\!\!\!\sum_{k\in\mathcal{K}}\left\Vert\nabla_{\bm{w}^{\mathrm{t}}_{k}} f_{k}(\bm{w}^{\mathrm{t}}_{k};\bm{w}^{\mathrm{a}} )\right\Vert^2\right)\!\!\\
    &\leq\frac{2}{K^2}\left(\sum_{k\in\mathcal{K}}\!\left\Vert\nabla_{\bm{w}^{\rm t}}f_k(\bm{w}^{\rm t};\bm{w}_k^{\rm a})\right\Vert^2\!\!+\!\!\sum_{k\in\mathcal{K}}\!\left\Vert\nabla_{\bm{w}^{\rm a}_{k}}f_k(\bm{w}^{\rm a}_k; \bm{w}^{\rm t})\right\Vert^2\!\right).
\end{align*}
\subsection{Proof of Theorem \ref{thm1}}
According to Lemma \ref{lmm1}, we first fix the parameters of task model and then derive the upper bound of convergence with respect to the global adapter $\bm{w}^{\rm a}$.
Recall that in each communication round, $N$ edge devices participate in the gradient aggregation, according to the global model update rule, we have:
\begin{align}
    \begin{aligned}
        \bm{w}^{\rm a}(t+1) &= \bm{w}^{\rm a}(t) - \frac{\eta}{{N(t)}} \sum_{k\in\tilde{\mathcal{K}}(t)}\bm{g}^{\rm a}_k(t).
    \end{aligned}
\end{align}
To assist the derivation, we define a virtual model $\bm{v}^{\rm a}(t+1)$ that assumes all devices participated in each training round as
\begin{align}
    \begin{aligned}
        \bm{v}^{\rm a}(t+1)=\bm{w}^{\rm a}(t)-\frac{\eta}{K} \sum_{k=1}^{K}\bm{g}^{\rm a}_k(t).
    \end{aligned}
\end{align}
For the simplicity of the following proof, we abbreviate $\nabla_{\bm{w}^{\rm t}} f_k(\bm{w}^{\rm a}_{k}(t);\bm{w}^{\rm t}(t+1) )$ and ${\nabla} \bm{w}^{\rm{a}}_{k} \frac{{\rm{d}}\bm{z}_k(t)}{{\rm{d}}\bm{w}^{\rm{a}}_k(t)}$ as $\nabla_{\bm{w}^{\rm a}_k} f_k(\bm{w}^{\rm a}(t))$ and $\bm{g}^{\rm a}_k(t)$, respectively.

According to Assumption \ref{ass-2}, by defining $\bm{\Upsilon}^{\rm a}(t+1)=\bm{w}^{\rm a}(t+1)-\bm{v}^{\rm a}(t+1)$, we have:
    \begin{align*}\label{proof-main}
        &f_k(\bm{w}^{\rm a}(t+1); \bm{w}^{\rm t}_{k}(t))-f_k(\bm{w}^{\rm a}(t); \bm{w}^{\rm t}_{k}(t))\\
        \leq&\nabla_{\bm{w}^{\rm a}} f_k(\bm{w}^{\rm a}(t))^T\left(\bm{w}^{\rm a}(t+1)-\bm{v}^{\rm a}(t+1) \right)\\
        &+\nabla_{\bm{w}^{\rm a}} f_k(\bm{w}^{\rm a}(t))^T\left(\bm{v}^{\rm a}(t+1)-\bm{w}^{\rm a}(t)\right)\\
        &+\frac{L}{2}\left\Vert\bm{w}^{\rm a}(t+1)-\bm{v}^{\rm a}(t+1)+\bm{v}^{\rm a}(t+1)-\bm{w}^{\rm a}(t)\right\Vert^2\\
        =&\nabla_{\bm{w}^{\rm a}} f_k(\bm{w}^{\rm a}(t))^T\bm{\Upsilon}^{\rm a}(t+1)+\frac{L\eta^2}{2}\left\Vert\frac{1}{K}\sum_{k=1}^K\bm{g}^{\rm a}_k(t)\right\Vert^2\\
        &+\frac{L}{2}\left\Vert\bm{\Upsilon}^{\rm a}(t+1)\right\Vert^2-\eta L\left<\bm{\Upsilon}^{\rm a}(t+1),\frac{1}{K}\sum_{k=1}^K\bm{g}^{\rm a}_k(t)\right>\\
        &-\eta\nabla_{\bm{w}^{\rm a}} f_k(\bm{w}^{\rm a}(t))^T\left(\frac{1}{K}\sum_{k=1}^K\bm{g}^{\rm a}_k(t)\right).
    \end{align*}
By taking the expectation over the random ${N(t)}$ of $K$ edge devices at the $t$-th communication round, we have:
\begin{align}
    \mathbb{E}[\bm{w}^{\rm a}(t+1)]=\bm{v}^{\rm a}(t+1).
\end{align}
Therefore, \eqref{proof-main} can be upper bounded by 
\begin{align*}
    \begin{aligned}
        &\mathbb{E}\left[f_k(\bm{w}^{\rm a}(t+1); \bm{w}^{\rm t}_{k}(t))-f_k(\bm{w}^{\rm a}(t); \bm{w}^{\rm t}_{k}(t))\right]\\
        \leq& -\eta\mathbb{E}\left<\nabla_{\bm{w}^{\rm a}} f_k(\bm{w}^{\rm a}(t)),\frac{1}{K}\sum_{k=1}^K\bm{g}^{\rm a}_k(t)\right>\\
        &+\frac{L}{2}\mathbb{E}\Vert\bm{\Upsilon}^{\rm a}(t+1)\Vert^2 +\frac{L\eta^2}{2K}\sum_{k=1}^K\mathbb{E}\left\Vert\bm{g}^{\rm a}_k(t)\right\Vert^2\\
        =&\eta\mathbb{E}\left<\nabla_{\bm{w}^{\rm a}} f_k(\bm{w}^{\rm a}(t)), \nabla_{\bm{w}^{\rm a}} f_k(\bm{w}^{\rm a}(t))-\frac{1}{K}\sum_{k=1}^K\bm{g}_k^{\rm a}(t)\right>\\
        &+\frac{L\eta^2}{2K}\sum_{k=1}^K\mathbb{E}\Vert\bm{g}_k^{\rm a}(t)\Vert^2+\frac{L}{2}\mathbb{E}\Vert\bm{\Upsilon}^{\rm a}(t+1)\Vert^2 \\
        &-\eta\mathbb{E}\Vert\nabla_{\bm{w}^{\rm a}} f_k(\bm{w}^{\rm a}(t))\Vert^2\\
        =&\frac{\eta}{K}\sum_{k=1}^K\mathbb{E}\left<\nabla_{\bm{w}^{\rm a}} f_k(\bm{w}^{\rm a}(t)),\nabla_{\bm{w}^{\rm a}} f_k(\bm{w}^{\rm a}(t))-\bm{g}_k^{\rm a}(t)\right>\\
        &+\frac{L\eta^2}{2K}\sum_{k=1}^K\mathbb{E}\Vert\bm{g}_k^{\rm a}(t)\Vert^2+\frac{L}{2}\mathbb{E}\Vert\bm{\Upsilon}^{\rm a}(t+1)\Vert^2 \\
        &-\eta\mathbb{E}\Vert\nabla_{\bm{w}^{\rm a}} f_k(\bm{w}^{\rm a}(t))\Vert^2.\\
    \end{aligned}
\end{align*}
By further applying the Cauchy-Schwarz inequality to above inequality, we have 
\begin{equation}\label{upp_bound_exp_wt}
    \begin{split} 
    &\mathbb{E}\left[f_k(\bm{w}^{\rm a}(t+1); \bm{w}^{\rm t}_{k}(t))-f_k(\bm{w}^{\rm a}(t); \bm{w}^{\rm t}_{k}(t))\right]\\
        \leq& \frac{\eta}{2K}\sum_{k=1}^K\mathbb{E}\Vert\nabla_{\bm{w}^{\rm a}} f_k(\bm{w}^{\rm a}(t))-\bm{g}_k^{\rm a}(t)\Vert^2\\
        &+\frac{L\eta^2}{2K}\sum_{k=1}^K\mathbb{E}\Vert\bm{g}_k^{\rm a}(t)\Vert^2+\frac{L}{2}\mathbb{E}\Vert\bm{\Upsilon}^{\rm a}(t+1)\Vert^2 \\
        &+\frac{\eta}{2K}\sum_{k=1}^K\mathbb{E}\Vert\nabla_{\bm{w}^{\rm a}} f_k(\bm{w}^{\rm a}(t))\Vert^2\\
        &-\eta\mathbb{E}\Vert\nabla_{\bm{w}^{\rm a}} f_k(\bm{w}^{\rm a}(t))\Vert^2.\\
    \end{split}
\end{equation}
Recall that $\bm{g}_k(t)$ is an unbiased estimation of $\nabla_{\bm{w}^{\rm t}} f_k(\bm{w}^{\rm t}(t))$, we can further reorganize \eqref{upp_bound_exp_wt} and have 
\begin{equation}\label{upp_bound_exp_wt_2}
    \begin{split} 
    &\mathbb{E}\left[f_k(\bm{w}^{\rm a}(t+1); \bm{w}^{\rm t}_{k}(t))-f_k(\bm{w}^{\rm a}(t); \bm{w}^{\rm t}_{k}(t))\right]\\
        \leq& \frac{L\eta^2}{2K}\sum_{k=1}^K\mathbb{E}\Vert\bm{g}^{\rm a}_k(t)\Vert^2+\frac{L}{2}\mathbb{E}\Vert\bm{\Upsilon}^{\rm a}(t+1)\Vert^2 \\
        &+\frac{\eta}{2K}\sum_{k=1}^K\mathbb{E}\Vert\nabla_{\bm{w}^{\rm a}} f_k(\bm{w}^{\rm a}(t))\Vert^2\\
        &-\eta\mathbb{E}\Vert\nabla_{\bm{w}^{\rm a}} f_k(\bm{w}^{\rm a}(t))\Vert^2.\\
    \end{split}
\end{equation}
To this end, we unfold $\mathbb{E}\Vert\bm{\Upsilon}^{\rm t}(t+1)\Vert^2$ as below
\begin{align}\label{upsilon}
    \begin{aligned}
        &\mathbb{E}\left\Vert\bm{\Upsilon}^{\rm a}(t+1)\right\Vert^2
        =\mathbb{E}\left\Vert\frac{1}{{N(t)}}\sum_{k\in\tilde{\mathcal{K}}(t)}\bm{g}_k^{\rm a}(t)-\bm{g}(t)\right\Vert^2\\
        =&\mathbb{E}\left\Vert\frac{1}{{N(t)}}\sum_{k=1}^K\mathbb{I}(k\in\tilde{\mathcal{K}}(t))(\bm{g}_k^{\rm a}(t)-\bm{g}(t))\right\Vert^2\\
        =&\frac{1}{{N(t)}^2}\mathbb{E}\left[\sum_{k=1}^K\mathbb{I}(k\in\tilde{\mathcal{K}}(t))\Vert\bm{g}_k^{\rm a}(t)-\bm{g}(t)\Vert^2\right.\\
        &\left.+\sum_{k=1}^K\sum_{\substack{j=1\\j\neq k}}^K\mathbb{I}(k\in\tilde{\mathcal{K}}(t))\mathbb{I}(j\in\tilde{\mathcal{K}}(t))\right.\\
        &\left.\times \left<\bm{g}_k^{\rm a}(t)-\bm{g}(t),\bm{g}^{\rm a}_{j}(t)-\bm{g}(t)
        \right>\right],
    \end{aligned}
\end{align}
where $\bm{g}^{\rm a}(t) =\frac{1}{K} \sum_{k=1}^{K}\bm{g}^{\rm a}_k(t)$, $\mathbb{I}(k\in\tilde{\mathcal{K}}(t))$ denotes an indicator function that equals to 1 when $k\in\tilde{\mathcal{K}}(t) $.
Since the expectation is performed over $\tilde{\mathcal{K}}(t)$, the first part of \eqref{upsilon} can be represented as
\begin{align}\label{upsilon-1}
    \begin{aligned}
        &\frac{1}{{N(t)}^2}\frac{\mathcal{C}_{K-1}^{{N(t)}-1}}{\mathcal{C}_{K}^{N(t)}}\sum_{k=1}^{N(t)}\Vert\bm{g}^{\rm a}_k(t)-\bm{g}(t)\Vert^2\\
        =&\frac{1}{{N(t)}K}\sum_{k=1}^{N(t)}\Vert\bm{g}^{\rm a}_k(t)-\bm{g}(t)\Vert^2,
    \end{aligned}
\end{align}
and we perform similar derivations as in appendix B in \cite{shi2020joint} and appendix C in \cite{amiri2021convergence}, the second part can be rewritten as
\begin{align}\label{upsilon-2}
    \begin{aligned}
        &\frac{1}{{N(t)}^2}\mathbb{E}_{\tilde{\mathcal{K}}(t)}\sum_{k=1}^K\sum_{j=1,j\neq k}^K\mathbb{I}(k\in\tilde{\mathcal{K}}(t))\mathbb{I}(j\in\tilde{\mathcal{K}}(t))\\
        &\times \left<\bm{g}^{\rm a}_k(t)-\bm{g}(t),\bm{g}^{\rm a}_{j}(t)-\bm{g}(t)\right>\\
        &=\frac{\mathcal{C}_{K-2}^{{N(t)}-2}}{\mathcal{C}_{K}^{N(t)}}\sum_{k=1}^K\sum_{j=1,j\neq k}^K\left<\bm{g}^{\rm a}_k(t)-\bm{g}(t),\bm{g}^{\rm a}_{j}(t)-\bm{g}(t)\right>.
    \end{aligned}
\end{align}
With (\ref{upsilon-1}) and (\ref{upsilon-2}), we can rewrite (\ref{upsilon}) as
\begin{align*}
    \begin{aligned}
    \mathbb{E}\left\Vert\bm{\Upsilon}^{\rm t}(t+1)\right\Vert^2
    &=\frac{K-{N(t)}}{K{N(t)}(K-1)}\sum_{k=1}^K\mathbb{E}\Vert\bm{g}^{\rm a}_k(t)-\bm{g}(t)\Vert^2\\
    &\leq\frac{K-{N(t)}}{K{N(t)}(K-1)}\sum_{k=1}^K\mathbb{E}\Vert\bm{g}^{\rm a}_k(t)\Vert^2,
    \end{aligned}
\end{align*}
where the inequality is due to the fact that $\sum_{i=1}^{N(t)}\Vert\bm{a}_i-\Bar{\bm{a}}\Vert^2\leq\sum_{i=1}^{N(t)}\Vert\bm{a}_i\Vert^2$.
Hence, we can reorganize \eqref{upp_bound_exp_wt} and have
\begin{equation}
    \begin{split}
        &\mathbb{E}\left[f_k(\bm{w}^{\rm a}(t+1); \bm{w}^{\rm t}_{k}(t))-f_k(\bm{w}^{\rm a}(t); \bm{w}^{\rm t}_{k}(t))\right]\\
        \leq&\frac{L\eta^2}{2K}\sum_{k=1}^K\mathbb{E}\Vert\bm{g}^{\rm a}_k(t)\Vert^2+\frac{(K-{N(t)})L}{2K{N(t)}(K-1)}\sum_{k=1}^K\mathbb{E}\Vert\bm{g}^{\rm a}_k(t)\Vert^2\\
        &+\frac{\eta}{2K}\sum_{k=1}^K\mathbb{E}\Vert\nabla_{\bm{w}^{\rm a}} f_k(\bm{w}^{\rm a}(t))\Vert^2\\
        &-\eta\mathbb{E}\Vert\nabla_{\bm{w}^{\rm a}} f_k(\bm{w}^{\rm a}(t))\Vert^2.\\
    \end{split}
\end{equation}
Recall that the update of task model at each edge device is conducted without averaging operation, we can derive the corresponding upper bound with respect to the task model as below
\begin{equation}
    \begin{split}
        &\mathbb{E}\left[f_k(\bm{w}^{\rm t}(t+1); \bm{w}^{\rm a}_{k}(t))-f_k(\bm{w}^{\rm t}(t); \bm{w}^{\rm a}_{k}(t))\right]\\
        \leq&\frac{L\eta^2}{2K}\sum_{k=1}^K\mathbb{E}(\Vert\bm{g}^{\rm t}_k(t)\Vert^2)-\eta\mathbb{E}\Vert\nabla_{\bm{w}^{\rm t}} f_k(\bm{w}^{\rm t}(t))\Vert^2\\
        &+\frac{\eta}{2K}\sum_{k=1}^K\mathbb{E}\Vert\nabla_{\bm{w}^{\rm t}} f_k(\bm{w}^{\rm t}(t))\Vert^2.\\
    \end{split}
\end{equation}
To this end, we can combine these two separate bounds as following
\begin{equation*}
    \begin{split}
        &F(\bm{W}(t+1))- F(\bm{W}(t))\\
        \leq&\sum_{k=1}^{K}\left[f_k(\bm{w}^{\rm a}(t+1); \bm{w}^{\rm t}_{k}(t))-f_k(\bm{w}^{\rm a}(t); \bm{w}^{\rm t}_{k}(t))\right.\\
        &\left.+ f_k(\bm{w}^{\rm a}(t+1); \bm{w}^{\rm t}_{k}(t+1))-f_k(\bm{w}^{\rm a}(t+1); \bm{w}^{\rm t}_{k}(t))  \right]\\
        \leq&\sum_{k=1}^{K}\left[\frac{L\eta^2}{2K}\sum_{k=1}^K\mathbb{E}\left(\Vert\bm{g}_k^{\rm t}(t)\Vert^2 + \Vert\bm{g}^{\rm a}_k(t)\Vert^2\right)\right.\\
        &+\frac{(K-{N(t)})L}{2K{N(t)}(K-1)}\sum_{k=1}^K\mathbb{E}(\Vert\bm{g}^{\rm a}_k(t)\Vert^2)\\
        &+\frac{\eta}{2K}\sum_{k=1}^{K}\left(\mathbb{E}(\Vert\nabla_{\bm{w}^{\rm t}} f_k(\bm{w}^{\rm t}(t))\Vert^2) + \mathbb{E}(\Vert\nabla_{\bm{w}^{\rm a}} f_k(\bm{w}^{\rm a}(t))\Vert^2)\right)\\
        &\left.-\eta\left(\mathbb{E}(\Vert\nabla_{\bm{w}^{\rm t}} f_k(\bm{w}^{\rm t}(t))\Vert^2) +\mathbb{E}(\Vert\nabla_{\bm{w}^{\rm a}} f_k(\bm{w}^{\rm a}(t))\Vert^2)\right)\right]\\
            \end{split}
    \end{equation*}
        \begin{equation*}
        	\begin{split}
        \mathop{\leq}^{(a)}&\sum_{k=1}^{K}\left[\frac{L\eta^2}{2}\left(\Vert\nabla_{\bm{w}^{\rm t}} f_k(\bm{w}^{\rm t}(t))\Vert^2 + \Vert\nabla_{\bm{w}^{\rm a}} f_k(\bm{w}^{\rm a}(t))\Vert^2\right)\right.\\
        &+\frac{(K-{N(t)})L}{2{N(t)}(K-1)}(\Vert\nabla_{\bm{w}^{\rm a}} f_k(\bm{w}^{\rm a}(t))\Vert^2)+\frac{L\eta^2}{2}\phi^2(\Omega^{\rm a}+\Omega^{\rm t})\\
        &+\frac{\eta}{2}\left(\Vert\nabla_{\bm{w}^{\rm t}} f_k(\bm{w}^{\rm t}(t))\Vert^2 + \Vert\nabla_{\bm{w}^{\rm a}} f_k(\bm{w}^{\rm a}(t))\Vert^2\right)\\
        & -\eta \left( \Vert\nabla_{\bm{w}^{\rm t}} f_k(\bm{w}^{\rm t}(t))\Vert^2 + \Vert\nabla_{\bm{w}^{\rm a}} f_k(\bm{w}^{\rm a}(t))\Vert^2\right) \\
&\left.-\eta\phi^2(\Omega^{\rm a}+\Omega^{\rm t}) + \frac{(K-{N(t)})L}{2{N(t)}(K-1)}\phi^2\Omega^{\rm a}\right]\\    =&\sum_{k=1}^{K}\left[\left(\frac{L\eta^2-\eta}{2}+\frac{(K-{N(t)})L}{2{N(t)}(K-1)}\right)\Vert\nabla_{\bm{w}^{\rm a}} f_k(\bm{w}^{\rm a}(t))\Vert^2\right.\\
        &+\frac{L\eta^2-\eta}{2}\Vert\nabla_{\bm{w}^{\rm t}} f_k(\bm{w}^{\rm t}(t))\Vert^2+\phi^2\Omega^{\rm t}\left(L\eta^2 
        -\eta\right) \\
        &\left.+\phi^2\Omega^{\rm a}\left(L\eta^2 
        -\eta+\frac{(K-{N(t)})L}{2{N(t)}(K-1)}\right)\right],
    \end{split}
\end{equation*}
where $(a)$ is because $\mathbb{E}(\Vert \bm{x}\Vert^2) = \sum\text{var}(\bm{x}_i) + \Vert\mathbb{E}(\bm{x})\Vert^2$, and Assumption \ref{ass-4}, $\Omega^{\rm a}$ and $\Omega^{\rm t}$ denote the number of elements in $\Vert\nabla_{\bm{w}^{\rm a}} f_k(\bm{w}^{\rm a}(t))\Vert^2 $ and $ \Vert\nabla_{\bm{w}^{\rm t}} f_k(\bm{w}^{\rm t}(t))\Vert^2$, respectively.

Since we have $\frac{(K-{N(t)})L}{2{N(t)}(K-1)}\geq 0$, we can define 
\begin{align*}
    &\varsigma(t)=-\frac{L\eta^2-\eta}{K^2}-\frac{(K-{N(t)})L}{2{N(t)}(K-1)K^2},\\
    &\alpha(t)=\phi^2K^2\varsigma(t)\Omega^{\rm a}, \quad\beta = \Omega^{\rm t}\left(L\eta^2 
    -\eta\right)\phi^2.
\end{align*}
To this end, when $L<\frac{\eta {N(t)}(K-1)}{(K-1){N(t)}\eta^2+K-{N(t)}}$, the following inequality can be derived
\begin{equation*}
    \begin{split}
        &F(\bm{W}(T+1))- F(\bm{W}(T))\\
        \leq& -\frac{K^2 \varsigma(T)}{2}\sum_{k=1}^{K}\left(\Vert\nabla_{\bm{w}^{\rm t}} f_k(\bm{w}^{\rm t}(T))\Vert^2\!+\!\Vert\nabla_{\bm{w}^{\rm a}} f_k(\bm{w}^{\rm a}(T))\Vert^2\right)\\
        &-\alpha(T)+\beta\\
       \mathop{\leq}^{(b)} &-\varsigma(T)\left\Vert\nabla F(\bm{W}(T))\right\Vert^2 - \alpha(T)+\beta\\
        \mathop{\leq}^{(c)}& -2\tau \varsigma(T)(F(\bm{W}(T)) - F(\bm{W}^\star)) - \alpha(T)+\beta,
    \end{split}
\end{equation*}
where $(b)$ is due to Lemma \ref{lmm1} and $(c)$ is due to the Assumption \ref{ass-5}.
Thus, we can have 
\begin{equation*}
    \begin{split}
        &F(\bm{W}(T+1))- F(\bm{W}^{\star})\\
        \leq&\left(1-2\tau \varsigma(T)\right)(F(\bm{W}(T)) - F(\bm{W}^\star))+\beta-\alpha(T)\\
        =&(1-2\tau \varsigma(T))\left[F(\bm{W}(T)) - F(\bm{W}(T-1)) \right.\\
        &\left.+ F(\bm{W}(T-1)) -   F(\bm{W}^\star)\right]+\beta-\alpha(T)\\
        \leq&\prod_{i=0}^{T}(1-2\tau \varsigma(i))(F(\bm{W}(0)) -F(\bm{W}^\star))+\beta-\alpha(T) \\
        &+  \sum_{i=1}^{T}\left(\prod_{j=0}^{i-1}\left(1-2\tau \varsigma(T-j)\right) \right)(\beta-\alpha(T-i)).
    \end{split}
\end{equation*}
\subsection{Proof of Lemma \ref{lmm2}}
By substituting \eqref{lya_fun} into \eqref{lya_drift}, we have 
\begin{align*}
	\Delta V\left(\hat{D}(t)\right)&= \mathbb{E}\left[V\left(\hat{D}(t+1)\right) - V\left(\hat{D}(t)\right)\;|\; \hat{D}(t)\right]\\
	&=\frac{1}{2}\mathbb{E}\left[\hat{D}(t)^2(t+1) - \hat{D}(t)^2_{k}(t)\right].
\end{align*}
Then we expand quadratic terms and combine like terms, the above equation can be rewritten as
\begin{align*}
	\Delta V\left(\hat{D}(t)\right)
	=&\frac{1}{2}\left[\left(\hat{D}(t)+D(t)-\bar{D}(t) \right)^2 - \hat{D}(t) \right]\\
	=& \hat{D}(t)\left(D(t)-\bar{D}(t)\right)-D(t)\bar{D}(t)\\
    \mathop{\leq}^{(d)}&\hat{D}(t)\left(D(t)-\bar{D}(t)\right),
\end{align*}
where $(d)$ is due to the term $D(t)\bar{D}(t)$ is non-negative.
\subsection{Proof of Lemma \ref{lmm3}}
By denoting $\frac{1}{B_k(t)}$ by $y_k(t)$, we can rewrite \eqref{Pro_4} as 
\begin{equation}\label{trans_equation}
    2^{(\mu/\check{D}(t))y_k} = \frac{|h_k(t)|^2}{\sigma^2}y_k(t)+1,
\end{equation}
which is a transcendental equation due to the exponential term.
By multiplying $-\nu(t) = \frac{\mu|h_k(t)|^2}{\check{D}(t)\sigma^2}$ and dividing $2^{\frac{\mu}{\check{D}(t)}y_k(t)  -\nu(t)}$ at both side of \eqref{trans_equation}, we have
\begin{equation}\label{eq_1}
    -\nu(t)2^{\nu(t)} = \left(\frac{\mu}{\check{D}(t)}y_k(t)-\nu(t)\right)2^{-\left(\frac{\mu}{\check{D}(t)}y_k(t) -\nu(t)\right)}.
\end{equation}
Further we define $\tilde{y}_k(t) = \frac{\mu}{\check{D}(t)}y_k(t)-\nu(t)$, and multiply $\ln 2$ at both side of \eqref{eq_1}, we have
\begin{equation}
    \tilde{y}_k(t)\ln 2 \left(\exp^{\ln 2}\right)^{-\tilde{y}_k(t)} = -\ln 2\nu(t)2^{\nu(t)}.
\end{equation}
According to \cite{LambertW}, the solution for the above equation can be given via a Lambert-W function
\begin{equation}
    y_k(t) = -\frac{\check{D}(t){\rm LambW}\left(\nu(t)2^{\nu(t)}\ln{2} \right)}{\mu\ln 2}- \frac{\sigma^2}{|h_k(t)|^2},
\end{equation}
where ${\rm LambW}(\cdot)$ denotes the \textit{Lambert-W} function with $k=-1$.

\subsection{Proof of Theorem \ref{thm2}}\label{Proof_thm2}
We adopt the proof by contradiction with assuming that there is no threshold structure for any $\frac{\Delta}{2}-$optimal solution of Problem \textbf{P1}.
Let $\left\{\tilde{\mathcal{K}}^{\lozenge}(t), \{B_k^{\lozenge}(t)\}\right\}$ be the $\frac{\Delta}{2}-$optimal solution, then we can find a $n\leq K$ so that 
\begin{align}
			{\rm \rho}_j = \left\{\begin{array}{rc}
	0&\max\{1, k^\star - n\}<j<k^\star\\
	1&k^\star<j<\min\{K, k^\star +n\}
\end{array},\right.
\end{align}
where $\left\{{\rm idx_j}\;|\; \max\{1, k^\star - n\}<j<k^\star\right\}$ and $\{{\rm idx_j}\;|\;k^\star<j<\min\{K, k^\star +n\}\}$ can be swapped, and then the solution is a threshold structure.
Recall that the scheduling indexes beyond this region hold the same, the difference of $\mathcal{J}( \tilde{\mathcal{K}}(t), \{B_k(t)\})$ in terms of $\left\{\tilde{\mathcal{K}}^\star(t), \{B_k^\star(t)\}\right\}$ and $\left\{\tilde{\mathcal{K}}(t), \{B_k(t)\}\right\}$ is
\begin{equation*}
	\begin{split}
		&\mathcal{J}\left(\tilde{\mathcal{K}}^\star(t), \{B_k^\star(t)\}\right) - \mathcal{J}\left(\tilde{\mathcal{K}}(t), \{B_k(t)\}\right)\\
		=\;&\mathcal{J}\left(\tilde{\mathcal{K}}^\star(t), \{B_k^\star(t)\}\right) -\mathcal{J}\left(\tilde{\mathcal{K}}^{\lozenge}(t), \{B_k^{\lozenge}(t)\}\right) \\
		&+ \mathcal{J}\left(\tilde{\mathcal{K}}^{\lozenge}(t), \{B_k^{\lozenge}(t)\}\right)- \mathcal{J}\left(\tilde{\mathcal{K}}(t), \{B_k(t)\}\right)\\
		\leq\;&\frac{\Delta}{2}+\max\left\{\left||\tilde{\mathcal{K}}^{\lozenge}(t)|-|\tilde{\mathcal{K}}(t)|\right| - \zeta(t)\hat{D}(t)\left|D^{\lozenge}(t) - D(t)\right|\right\} \\
	\mathop{\leq}^{(b)}\;&\frac{\Delta}{2}+\max\left\{\left||\tilde{\mathcal{K}}^{\lozenge}(t)|-|\tilde{\mathcal{K}}(t)|\right| \right\}=\frac{\Delta}{2}+K=\Delta,
	\end{split}
\end{equation*}
where $(b)$ holds due to $-\zeta(t)\hat{D}(t)\left|D^{\lozenge}(t) - D(t)\right|<0$. To this end, we conclude that there is at least one solution is $\Delta-$optimal for Problem \textbf{P1} with the threshold structure. Note that the proposed online algorithm determines the scheduling set and the associated bandwidth allocation by sequentially comparing $\mathcal{J}\left(\tilde{\mathcal{K}}(t), \{B_k(t)\}\right)$ with the previously achieved one, the output of the proposed online algorithm is $\Delta-$ optimal.
\bibliographystyle{IEEEtran}
\bibliography{ref.bib}
\end{document}